\documentclass[pra,aps,floatfix,amsmath,superscriptaddress,twocolumn,longbibliography,nofootinbib]{revtex4-2}
\usepackage{amsmath,amssymb,amsthm,mathtools,braket,enumitem}
\usepackage{comment}
\usepackage[OT1]{fontenc}
\usepackage{lmodern}
\usepackage{times}
\usepackage{xcolor}
\usepackage{quantikz}
\usepackage{hyperref}
\hypersetup{colorlinks,linkcolor={red!50!black},citecolor={blue!50!black},urlcolor={blue!80!black}}
\usepackage[capitalise,nameinlink]{cleveref}
\newtheorem{theorem}{Theorem}

\newtheorem{conjecture}[theorem]{Conjecture}
\newtheorem{definition}[theorem]{Definition}
\newtheorem{lemma}[theorem]{Lemma}

\newcommand{\ketbra}[2]{\ket{#1}\!\bra{#2}}

\newcommand{\complexity}[1]{\ensuremath{\mathsf{#1}}}
\newcommand{\poly}{\operatorname{poly}}
\newcommand{\halfBQP}{\ensuremath{\frac12\mathsf{BQP}}}
\newcommand{\halfBQPone}{\ensuremath{\frac12\mathsf{BQP}_1}}
\newcommand{\Factoring}{\textnormal{\textsc{Factoring}}}
\makeatletter
\renewcommand\onecolumngrid{%
  \do@columngrid{one}{\@ne}%
  \let\set@footnotewidth\set@footnotewidth@one
  \def\footnoterule{\kern-6pt\hrule width 1.5in\kern6pt}%
}
\makeatother

\newcommand{\Z}{\mathbb Z}

\begin{document}
\title{Protecting Quantum Computers against Untrusted Users}
\author{Shiv Akshar Yadavalli}\affiliation{Joint Center for Quantum Information and Computer Science, University of Maryland and NIST, College Park, MD, USA}\author{Joel Rajakumar}\affiliation{Electrical \& Computer Engineering Department, The University of California, Los Angeles, CA, USA}\author{Alexander Schuckert}
\affiliation{DIENS, École Normale Supérieure, PSL University, CNRS, INRIA, 45 rue d’Ulm, Paris 75005, France}
\affiliation{QuEra Computing Inc., 1380 Soldiers Field Road, Boston, MA 02135, USA}
\affiliation{QuEra Computing UK Ltd., Unit 2, Tech Foundry 2, Didcot, OX11 0DF}
\author{Michael J. Gullans}
\affiliation{Joint Center for Quantum Information and Computer Science, University of Maryland and NIST, College Park, MD, USA}
\affiliation{QuEra Computing Inc., 1380 Soldiers Field Road, Boston, MA 02135, USA}
\begin{abstract}
Publicly accessible fault-tolerant quantum computers must preserve scientific utility while limiting cryptanalytic power. We propose the restricted model of computation, $\halfBQPone$: a quantum server provides random computational-basis inputs, revealed only after execution, and one designated output bit. This interface permits arbitrary circuits and system sizes. We conjecture that a classical client making polynomially many adaptive requests cannot efficiently factor RSA moduli. The model subsumes the well-known one-clean-qubit model $\complexity{DQC1}$, retaining its applications to infinite-temperature multi-time correlations, out-of-time-order correlators, and suitably normalized partition functions. We propose a candidate for separating $\halfBQPone$ from $\complexity{DQC1}$ based on testing classical predictions of quantum spin dynamics. We argue security in two steps. First, we show that one-bit readout makes the quantum stages of standard factoring algorithms classically simulable, even when their preparation and postprocessing are redesigned within a broad family around a single arithmetic-oracle call. This covers the quantum stages of Shor, Eker{\aa}--H{\aa}stad, Kitaev, and Regev factoring constructions. Second, we examine the workaround of coherently implementing rational reconstruction to release a factor bit. Random inputs obstruct this straightforward attack: known techniques accommodate logarithmic-depth classical circuits, but rational reconstruction has resisted such parallelization for decades. Therefore, we show that security of our protocol is built on not only classical hardness for factoring, but also conjectured circuit lowerbounds for rational reconstruction.
\end{abstract}

\maketitle

\section{Motivation}

Utility-scale quantum computers could arrive within this decade \cite{preskill2025megaquop,ibm2025starling}, and their effects will cascade through industry, governance, and the security assumptions that underwrite modern life. As a community of quantum scientists and engineers, we cannot defer these implications to an indefinite future; we must address them now.

Quantum computing profoundly expands our ability to understand complex quantum systems, with real implications for chemistry, clean energy, medicine, and material science \cite{bauer2020chemistry,reiher2017reactions,santagati2024drug}. However, the same capability can also be used to undermine digital trust: at sufficient scale, Shor's quantum factoring algorithm threatens to break widely deployed public-key cryptography that relies on the classical hardness of factoring large numbers  \cite{shor1997}. These cryptography protocols form the security foundation of digital communication, commerce, and software integrity globally \cite{nist2024transition}. Breaking them would erode the basic guarantees of privacy, authenticity, and integrity that keep the internet (and the institutions built on it) functioning.

While this makes post-quantum cryptography both necessary and inevitable \cite{nist2024transition}, this transition will take years and require substantial effort. Existing disparities in cybersecurity resources raise the concern that vulnerable organizations could be left behind~\cite{wef2025cyber}. Worse, the ``harvest now, decrypt later'' incentive for malicious state actors makes this a present-day governance problem, not a future one~\cite{nist2024transition}. Looking further ahead, untrusted users could also be misaligned AI agents with enormous classical computational resources, autonomously interfacing with quantum computing systems and directing them toward harmful objectives. We must  therefore \textit{disallow} the use of quantum computers to break deployed cryptography --- even as we allow ample power for beneficial applications like simulation and discovery.\\

We treat this challenge posed by quantum progress as a technical design problem. We develop simple interfaces that obstruct standard factoring algorithms and a broad family of adversarial variants, while preserving substantial computational power for scientific discovery.

\subsection{Public access and the security interface}
\label{sec:interface}

Imagine a world with several public fault-tolerant quantum computers that classical clients can access. A classical client submits polynomial size circuits to a quantum server to execute and receives classical replies. The client may make polynomially many adaptive requests and process them in randomized polynomial time.

In this setting, unrestricted quantum access permits Shor's factoring algorithm, and this would compromise RSA-based encryption and signatures. Progress in both hardware and algorithms makes this concern increasingly urgent. Recent experiments have demonstrated improved logical error correction~\cite{sivak2026reinforcement}, up to 94 error-detected and 48 error-corrected logical qubits have been encoded in trapped ions, with logical operations outperforming their unencoded counterparts~\cite{quantinuum2026iceberg}, and universal logical operations on  a neutral-atom platform, using arrays of up to 448 physical atoms~\cite{bluvstein2026fault}. Resource estimates for superconducting qubits have lowered the projected physical-qubit requirement for RSA-2048 from 20 million to below one million, with a runtime below one week in the latter estimate under specified hardware assumptions~\cite{gidney2021factoring,gidney2025factoring}. For reconfigurable qubits such as neutral atoms and trapped ions, a surface-code architecture exploiting reconfigurable connectivity and parallel (transversal) logical gates reduced the projected RSA-2048 runtime to 5.6 days using 19 million physical qubits with 1-ms error-correction cycles~\cite{zhou2025resource}, and a preliminary estimate based on qLDPC codes put RSA-2048 factoring at approximately 102,000 physical qubits and 97 days, assuming 1-ms error-correction cycles and highly parallel logical operations~\cite{cain2026shor}.

These advances bring closer a future in which unrestricted public quantum access puts cryptographically relevant factoring within reach of any user.Furthermore, an untrusted user can be exceedingly creative with their circuit and need not follow a prescribed algorithm or compilation method. They may disguise its purpose or distribute a computation across adaptive requests. Screening for recognizable cryptanalytic algorithms therefore leaves an obfuscation problem ~\cite{huang2026obfuscation}.

A naive way to disallow users from factoring RSA-2048 would be to restrict the total circuit size that a user can execute. However, this is unsatisfactory as we would like users to reap the benefits of larger and larger quantum computers as they become available. Thus, it would be ideal to identify some method to disallow factoring at \textit{all system sizes} (an asymptotic scale).

In this work, we propose a \textit{security interface} regulating the inputs and outputs of quantum computations specified by the client. The server controls input preparation and output measurement, as well as what information about the inputs and outputs is returned to the client and when. The client remains free to design the computation, including adversarially. Security must therefore concern the computational power of the full interaction, including classical processing and adaptive requests. If the interface restricts the full interaction to computations that cannot factor efficiently, disguising an algorithm cannot evade that restriction. These input and output rules define a \textit{restricted computational model}.

\subsection{Connection to restricted models of computation}

Restricted models of quantum computation have been studied extensively in theoretical computer science, motivated by the question of whether there is a meaningful notion of a computer with power intermediate between classical and quantum. This is exactly the kind of asymptotic restriction the previous section asks for, and the observation that drives this paper is that such models can often be imposed through the interface alone, by controlling nothing but inputs and outputs.

The canonical example is the one-clean-qubit model, $\complexity{DQC1}$~\cite{knill1998}, in which the computer can manipulate a single qubit in a known pure state while every other qubit is maximally mixed. Originally motivated by NMR, the model is surprisingly powerful: it performs computations believed to require classical time exponential in the number of \emph{maximally mixed} qubits. Crucially, no method is known for factoring within this complexity class. And a server can impose it without inspecting the client's circuit at all: the client submits an arbitrary circuit, and the server runs it on one clean qubit and returns one output bit.

This model would already serve our purpose, with plausible security against factoring and known applications. We go one step further and ask for the most powerful restricted model that still appears unable to factor. Our starting point is $\halfBQP$, a model recently shown to factor~\cite{jacobs2024half}; we find a simple weakening that removes this capability, and the result is $\halfBQPone$, which can alternatively be seen as a computational model with slightly more computational power than $\complexity{DQC1}$. As with most questions of this complexity-theoretic flavor, we cannot prove definitively that $\halfBQPone$ fails to factor, nor that it is strictly more powerful than $\complexity{DQC1}$ (for instance, no one has proved $\complexity{DQC1}$ strictly weaker than $\complexity{BQP}$ either). What we can do is say precisely what a proof of either statement would require, and give reasons to believe both.

\section{Overview}
\label{sec:overview}

In this work, we have translated our practical concern into a theoretical question,\\

\textit{What is the strongest restricted model of quantum computation that does not contain factoring?}\\

Our technical contributions primarily have to do with identifying such a candidate. We precisely identify the necessary assumptions one must make in order for such a restricted model of quantum computation to be incapable of factoring asymptotically and retain utility. We also remark briefly on how simpler security interfaces will allow enhanced security and enhanced utility in the near-term, where finite-size effects come into play.

\subsection{The Security Interface}

The interface we propose is the following (Fig.~\ref{fig:interface}a). The client fixes an $n$-qubit \textit{unitary} circuit $U$. Then \textit{after the circuit is fixed}, the server draws a uniform string $x\in\{0,1\}^n$, prepare $U\ket x$, measures the \textit{first qubit} in the computational basis, and returns the outcome together with $x$. We call this model $\halfBQPone$, the subscript recording that a single bit is read out; Section~\ref{sec:complexity} defines it formally alongside $\complexity{DQC1}$ and $\halfBQP$ (Table~\ref{tab:models}), along with their associated complexity classes.

\begin{figure}[t]
\centering
\small
(a)\quad
\begin{quantikz}[row sep={0.55cm,between origins}, column sep=0.5cm]
\lstick[wires=3]{$\ket{x}$, $x$ uniform} & \gate[wires=3]{U} & \meter{} & \rstick{$y$} \qw \\
 & \qw & \qw & \qw \\
 & \qw & \qw & \qw
\end{quantikz}
\\[1.5em]
(b)\quad
\begin{quantikz}[row sep={0.55cm,between origins}, column sep=0.5cm]
\lstick[wires=3]{$\ket{s}$} & \gate[wires=3]{V} & \gate[wires=3]{O_\pi} & \gate[wires=3]{\mathcal N} & \meter{} & \rstick{$z$} \qw \\
 & \qw & \qw & \qw & \qw & \qw \\
 & \qw & \qw & \qw & \qw & \qw
\end{quantikz}
\caption{(a) The $\halfBQPone$ interface. The client submits an $n$-qubit circuit $U$; only then does the server draw a uniform $x\in\{0,1\}^n$, prepare $\ket x$, apply $U$, measure the first qubit, and return the reply $(x,y)$. $\complexity{DQC1}$ differs only in fixing the first input qubit to $\ket0$ and returning $y$ alone; $\halfBQP$ differs only in measuring every qubit. (b) A single-call circuit $A=\mathcal N\,O_\pi\,V$ (Section~\ref{sec:query-class}): a tractable preparation $V$, one call to an arithmetic oracle $O_\pi$ such as modular exponentiation, and a normalizer circuit $\mathcal N$ such as a Fourier transform. With all qubits read, Shor's circuit has this form and factors; Theorem~\ref{thm:query-class-sampler} shows that with one bit read, every such circuit is classically simulable on every basis input $s$. The decoder attack of Section~\ref{sec:decoder-attack} would append a Boolean decoder between $\mathcal N$ and the measurement. The random input denies the clean workspace required by this attack.}
\label{fig:interface}
\end{figure}

\subsection{Investigating Security at Asymptotic Scale}

Note all three of the input, output, and unitary restrictions in $\halfBQP_1$ are \textit{necessary} for security. Drop the randomization and the client has all of $\complexity{BQP}$, which can factor large numbers. Measure all $n$ qubits instead of one and the client has $\halfBQP$, which can also factor large numbers ~\cite{jacobs2024half}. Allow for mid-circuit measurements or reset and the client can factor via an old protocol of \cite{ParkerPlenio2000}.
Our central conjecture is that these restrictions may also be \textit{sufficient} for security.
\begin{conjecture}[Security of $\halfBQPone$]
\label{conj:main}
\[
 \boxed{\Factoring\notin
        \complexity{BPP}^{\complexity{Samp}\halfBQPone}}
\]
\end{conjecture}
\noindent The superscript denotes polynomially many adaptive requests to the $\halfBQPone$ interface, with randomized polynomial-time classical processing between them, and is made precise in Section~\ref{sec:complexity}; $\Factoring$ is the decision problem of returning one bit of a factor, made precise in Section~\ref{sec:security}. Note, since the classical client may ignore the server, the conjecture presupposes that factoring is classically hard. In this work, we investigate what further assumptions are necessary to say that the restricted quantum computation adds no additional ability to factor.

Our investigation of the conjecture comes in two steps, one per restriction. The first concerns the output. The quantum stage of each standard factoring algorithm---Shor, Eker{\aa}--H{\aa}stad, Kitaev, Regev---makes a single call to an arithmetic oracle such as modular exponentiation and its advantage lives in the full Fourier sample. We show in Theorem~\ref{thm:query-class-sampler}, Section~\ref{sec:query-class} that one bit of that sample carries none of it: with one-bit readout, every circuit of this shape is classically simulable.

We further show that this classical simulability is maintained over a broad family of post-processing steps (`normalizer' circuits) within the quantum circuit, suggesting that a successful attack on the security conjecture will likely need to leave the single-call framework.

The second step explains why the random input is still needed. In $\complexity{BQP}$, Shor's algorithm decides a bit of a factor by running its classical decoder coherently on clean ancillas after the Fourier transform, so that the one bit read out is already a bit of the factor. However, the decoder needs initialized workspace, and that is exactly what the random input denies. We investigate the possibility of coherently implementing a decoder with small workspace and we show that this is connected to a long-standing open problem of implementing \textit{rational reconstruction} with logarithmic-depth classical circuits. In particular, we prove explicitly in Theorem~\ref{thm:rr-attack}, Section~\ref{sec:decoder-attack} that by making repeated oracle calls (thus evading the no-go of Theorem~\ref{thm:query-class-sampler}) and also implementing rational reconstruction in logarithmic depth, one can engineer a successful factoring protocol in $\halfBQPone$. Rational reconstruction has resisted parallelization to logarithmic depth for decades, which is the reason to expect this escape to be closed; but since it lies in $\complexity P$, a proof of the conjecture would separate $\complexity P$ from $\complexity{NC}^1$.

We therefore do not expect an security theorem (since it would resolve a major open problem in circuit complexity). Our work instead identifies exactly how attempts to factor in $\complexity{\halfBQPone}$ will need to depart from known techniques: by leaving normalizer post-processing, by making multiple oracle calls, or by implementing rational reconstruction with low clean space. Moreover, we have shown that in addition to assuming that classical factoring is hard, our security conjecture requires that space/depth-restricted classical computation of continued fractions are impossible.

\subsection{Investigating Utility at Asymptotic Scale}

On the utility side, $\halfBQPone$ subsumes $\complexity{DQC1}$, so every simulation application of the one-clean-qubit model carries over. However, this is a large price compared to $\complexity{BQP}$. Any computation that needs a polynomial number of initialized qubits is unavailable in $\halfBQPone$ asymptotically: ground-state preparation and energy estimation, most of quantum chemistry, amplitude amplification. What survives is infinite-temperature physics, where random product states are not a nuisance but the natural starting point~\cite{heitmann2020typicality,rosenberg2024heisenberg}: transport and relaxation, scrambling, spectral statistics, thermodynamics at additive precision. However, near-term machines need not pay this price in full, as discussed in the next section (\cref{sec:overview-finite}).

The revealed input additionally allows the measurement outcome on the output qubit to be weighted by a classical function of the initial bitstring, i.e. an additional computational ability over $\complexity{DQC1}$. We propose a candidate application of this in Section~\ref{sec:classical-predictions}. We further note in Section~\ref{sec:security-simulation} proving $\halfBQPone$ is strictly more powerful than $\complexity{DQC1}$ would also encounter the same $\complexity P\ne\complexity{NC}^1$ barrier that proving security does, for similar reasons.

\subsection{Security and Utility at Finite Scale}\label{sec:overview-finite}
\begin{figure*}[ht]
  \centering
  \includegraphics[width=\textwidth]{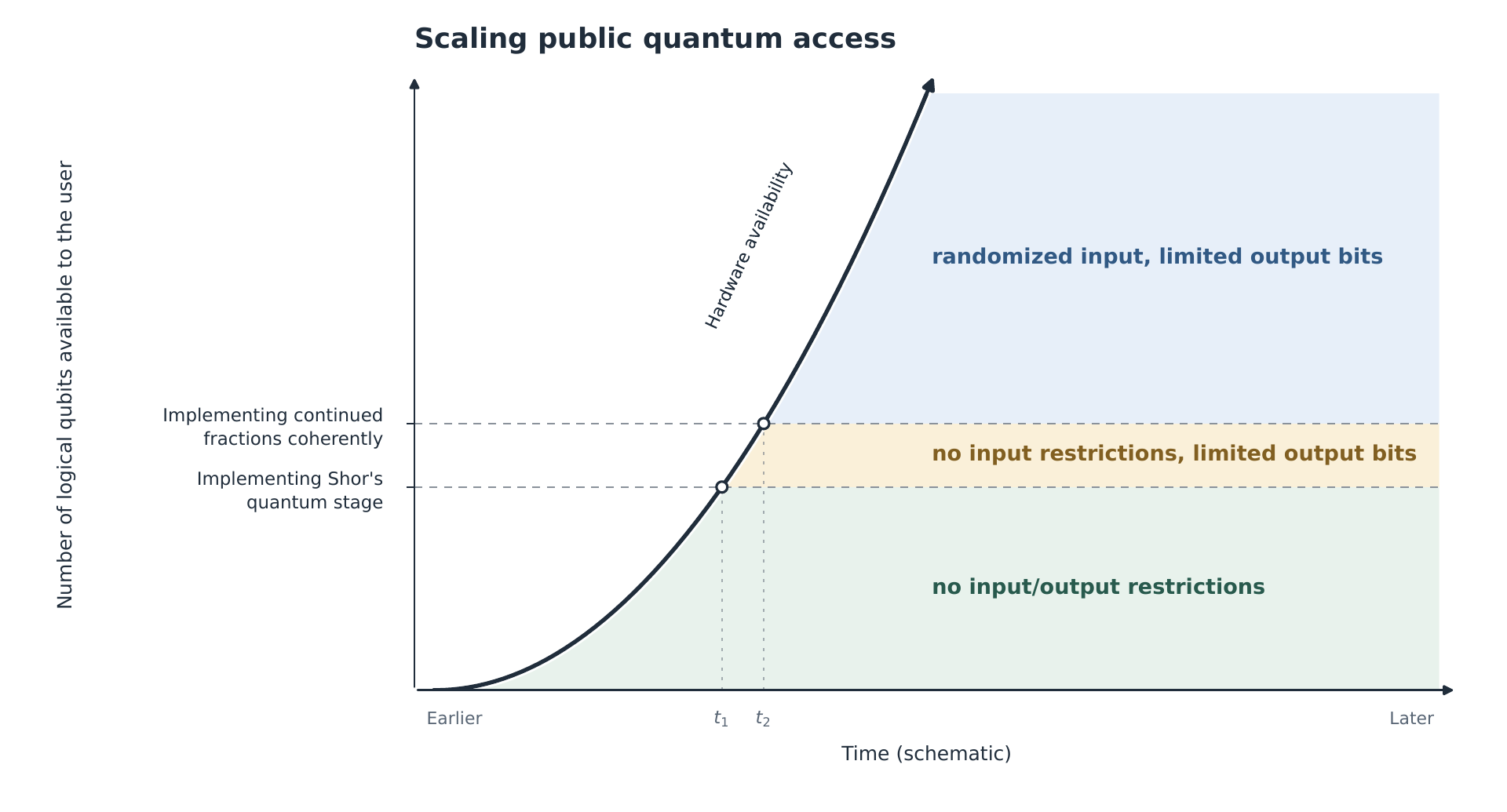}
  \caption{Schematic progression of our proposed interfaces for public quantum access as hardware scales, for a fixed modulus size. Each stage specifies the restrictions we argue suffice to prevent order-finding. \textbf{Green:} below the size required for Shor's quantum stage, so inputs and outputs can remain unrestricted. \textbf{Yellow:} Shor's quantum stage fits but the coherent continued-fraction decoder does not, so inputs can remain unrestricted while readout is limited to one bit. \textbf{Blue:} once the decoder also fits, the initial bitstring is sampled uniformly and revealed after execution, with readout still limited to one bit, giving the $\halfBQPone$ interface.}
  \label{fig:access-regimes}
\end{figure*}

Note, crucially, asymptotic statements about $\halfBQPone$ do not require all of the input qubits to be completely randomized: as long as the number of `clean' input bits do not scale with system size, i.e., a constant number, the model of computation is asymptotically equivalent to $\halfBQP_1$. This provides an important degree of freedom in practice as the server could choose this parameter in a manner that restricts known factoring protocols while taking advantage of scaling hardware.

In particular, \cref{thm:query-class-sampler} and \cref{thm:rr-attack} together suggest a finite-size security regime that does not require input randomization. Let $M_1$ be the number of logical qubits needed for the quantum stage of the most qubit-efficient known factoring algorithm, and let $M_2$ be the number of clean qubits needed to run a coherent decoder on the output of such a quantum stage. If there is a considerable gap between $M_2$ and $M_1$, a machine with $M$ clean input qubits, an unbounded number of dirty/random input qubits, and one-bit readout will be secure against known attacks when $M_1\le M<M_2$: Shor's stage fits, but without the decoder its one output bit is useless by \cref{thm:query-class-sampler}. Determining a reasonable lower bound on $M_2$ would require in-depth resource estimates, which is out of the current scope of this work. \Cref{sec:finite-size} shows that a straightforward coherent continued-fraction decoder costs about $10n$ logical qubits for an $n$-bit modulus, several times the arithmetic stage, and that optimized reversible Euclidean circuits bring this down; so $M_1$ could exceed $M_2$ by a considerable factor. This is the yellow region of \cref{fig:access-regimes}. In this regime one-bit readout retains every simulation task whose circuit fits the hardware, including tasks with clean inputs such as energy estimation~\cite{bauer2020chemistry}, and tasks with polynomially many numerical outputs at inverse-polynomial precision cost only polynomial measurement overhead~\cite{shi2024hydrodynamics,roymoulik2024correlations}.

\section{Complexity Theory for Public Access to Restricted Computational Models}
\label{sec:complexity}
First, we will distinguish between the model of computation itself and its associated complexity classes. To clarify to the non-complexity theorist, a model of computation generally indicates the properties and constraints imposed on a computer's inputs, outputs and computation executions, whereas its complexity class describes the problems that we can, or expect to solve using said computer. We use the same symbol, set in sans serif, for a named model and for its decision class, e.g.\ $\halfBQPone$; the words `model' or `interface' and `class' indicate which is meant.

\subsection{Restricted Models of Computation}

\begin{definition}[$\halfBQPone$]
    $\halfBQPone$ defines a model of computation which samples $x$ uniformly from \(\{0,1\}^n\), applies any polynomially-sized unitary circuit \(U\) to \(|x\rangle\), measures the first qubit to obtain \(y \in \{0,1\}\), and outputs \((x,y)\).
\end{definition}

\begin{definition}[$\halfBQP$]
    $\halfBQP$ defines a model of computation which samples $x$ uniformly from \(\{0,1\}^n\), applies any polynomially-sized unitary circuit \(U\) to \(|x\rangle\), measures all qubits to obtain \(y \in \{0,1\}^n\), and outputs \((x,y)\).
\end{definition}

\begin{definition}[$\complexity{DQC1}$]
    $\complexity{DQC1}$ defines a model of computation which samples $x$ uniformly from \(\{0,1\}^{n-1}\), applies any polynomially-sized unitary circuit \(U\) to \(\ket{0} \otimes |x\rangle\), measures the first qubit to obtain \(y \in \{0,1\}\), and outputs \(y\).
\end{definition}

\begin{table*}[t]
\caption{The three restricted models of this section, with unrestricted $\complexity{BQP}$ for comparison. In each case the client supplies an $n$-qubit circuit $U$, and the server controls the input, the measurement, and what is returned.}
\label{tab:models}
\begin{ruledtabular}
\begin{tabular}{llllll}
Model & Input state & Measured & Returned to client & Known to factor? & Simulation targets \\
\hline
$\complexity{BQP}$ & $\ket{0^n}$ & all qubits & $y\in\{0,1\}^n$ & yes~\cite{shor1997} & unrestricted \\
$\halfBQP$ & $\ket x$, $x$ uniform & all qubits & $(x,y)$, $y\in\{0,1\}^n$ & yes~\cite{jacobs2024half} & weighted transition statistics \\
$\halfBQPone$ & $\ket x$, $x$ uniform & first qubit & $(x,y)$, $y\in\{0,1\}$ & conjectured not (Conjecture~\ref{conj:main}) & weighted normalized traces (Sec.~\ref{sec:classical-predictions}) \\
$\complexity{DQC1}$ & $\ket0\otimes\ket x$, $x$ uniform & first qubit & $y\in\{0,1\}$ & no known method & normalized traces, OTOCs, partition functions \\
\end{tabular}
\end{ruledtabular}
\end{table*}

\subsection{Existing definitions of Decision Classes and Complete Problems}
One can now consider the natural notion of a decision problem that can be solved via \textit{a single request} to the restricted computational model. There are various inequivalent definitions in the literature\footnote{For instance, the classical controller can generate the circuit description in different ways.}; here, we use the definition of Jacobs and Mehraban~\cite{jacobs2024half}, which corresponds most closely to our setting.

\begin{definition} [Decision Problems]
    For a restricted computational model $C$, such as the three above, the complexity class $\complexity{C}$ is the set of languages $L \subseteq \{0,1\}^*$ for which, given any $z \in \{0,1\}^*$, a probabilistic polynomial-time
classical algorithm generates a circuit description from
$z$, makes \textnormal{one} request to $C$ with it, applies a polynomial-time classical function to the reply, and if $z \in L$ accepts with probability $\geq \frac12+\frac{1}{\poly(|z|)}$ and otherwise rejects with probability $\geq \frac12+\frac{1}{\poly(|z|)}$.
\end{definition}

Complete problems have been defined and studied for the decision classes $\complexity{DQC1}$ and $\halfBQP$. A complete problem captures the class exactly: it is solvable with a single request, and every problem solvable with a single request reduces to it classically. Formally, completeness concerns the gapped promise problems of deciding whether each quantity below exceeds a threshold or falls short of it by an inverse-polynomial margin. The following estimation problems characterize the three above
decision classes. In each case, the task is to estimate the
specified quantity to inverse-polynomial additive error with
success probability at least $2/3$. Here $U$ is a supplied
$n$-qubit unitary circuit, and each weight function $f$ is
specified by a classical circuit producing finite-precision
values in $[-1,1]$.

\begin{enumerate}
    \item \textit{Normalized trace estimation}
    ($\complexity{DQC1}$):
    estimate
    \[
        \frac{\operatorname{Re}\operatorname{Tr}(U)}{2^n}.
    \]

    \item \textit{Weighted normalized trace estimation}
    ($\halfBQPone$):
    given $f:\{0,1\}^n\to[-1,1]$, estimate
    \begin{equation}
        \frac{1}{2^n}
        \sum_{x\in\{0,1\}^n}
        f(x)\operatorname{Re}\langle x|U|x\rangle.
        \label{eq:weighted-trace-definition}
    \end{equation}

    \item \textit{Weighted transition statistic estimation}
    ($\halfBQP$):
    given $f:\{0,1\}^n\times\{0,1\}^n\to[-1,1]$, estimate
    \[
        \frac{1}{2^n}
        \sum_{x,y\in\{0,1\}^n}
        f(x,y)|\langle y|U|x\rangle|^2.
    \]
\end{enumerate}

Physically, Eq.~\eqref{eq:weighted-trace-definition} is an infinite-temperature average: the return amplitude of $U$ is averaged over a uniformly random initial product state $\ket x$, with each initial configuration weighted by the classical function $f(x)$. Replacing $U$ by a Hadamard-test circuit for an observable turns it into a weighted infinite-temperature expectation value, the form developed in Section~\ref{sec:classical-predictions}.

\subsection{Sampling Problems and Public Access}
While the prior definitions allow us to place $\halfBQPone$ relative to prior work, we remark that they are insufficient for our purposes. In particular, we would like the client to receive the raw replies of the interface and to process them over \textit{polynomially many adaptive} requests. In the setting of $\complexity{DQC1}$, the distinction between post-processing on replies and post-processing one output bit does not exist. However, for $\halfBQP$ and $\halfBQPone$, we would like to isolate what the interface itself returns from what the client does with it. To this end, we define \textit{sampling problems}: a sampling problem specifies a target output distribution for each input and the task is to sample from it to a prescribed accuracy. We refer the reader to Aaronson~\cite[Sec.~2.1]{aaronson2010sampling} for the more formal definition. We define it informally below,

\begin{definition}[$\mathsf{Samp}C$]
    For a restricted computational model $C$, $\mathsf{Samp}C$ is the set of
    families of distributions $\{D_z\}_{z \in \{0,1\}^*}$ for which there exists a
    probabilistic polynomial-time classical algorithm that, given $z$ and an
    accuracy parameter $\varepsilon > 0$, in time polynomial in $|z|$ and $1/\varepsilon$
    generates a circuit description and submits it \textnormal{once} to the
    model $C$, such that the output distribution of $C$ is within total variation
    distance $\varepsilon$ of $D_z$.
\end{definition}

In particular, problems in $\complexity{SampDQC1}$ correspond to probability distributions over $1$-bit outcomes, $\complexity{Samp}\halfBQPone$ to $(n+1)$-bit outcomes, and $\complexity{Samp}\halfBQP$ to $2n$-bit outcomes. Since the circuit description is generated by a probabilistic polynomial-time classical algorithm, these classes do not exclude classical computation: $\complexity{SampDQC1}$, for instance, contains every classically sampleable one-bit distribution, because the client can compute a bit $b$ classically and submit the circuit $X^b$ acting on the clean qubit. The purpose of the definition is rather to separate the raw replies of the interface from the client's processing of them, which the following class accounts for explicitly.

\begin{definition}[$\complexity{BPP}^{\complexity{Samp}C}$]
    For a restricted computational model $C$, $\complexity{BPP}^{\complexity{Samp}C}$ is the set of decision problems solvable, with success probability at least $2/3$ on every instance, by a probabilistic polynomial-time classical client that, polynomially many times and adaptively, submits a circuit description to the model $C$ and receives one reply drawn from its output distribution. In complexity-theoretic terms this is $\complexity{BPP}$ relative to an oracle for $\complexity{Samp}C$.
\end{definition}

With this notation, $\complexity{BPP}^{\complexity{Samp}\halfBQPone}$ is the class of decision problems solvable by a classical client with access to the $\halfBQPone$ interface, and it is the class that our security conjecture concerns.

We can also state a preliminary observation
\begin{theorem}
\label{thm:containments}
    \begin{align}
        \complexity{BPP}^{\complexity{SampDQC1}} \subseteq \complexity{BPP}^{\complexity{Samp}\halfBQPone} \subseteq \complexity{BPP}^{\complexity{Samp}\halfBQP} \ .
    \end{align}
\end{theorem}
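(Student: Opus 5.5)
The plan is to prove each inclusion by showing that a single request to the weaker interface can be answered by a single request to the stronger one, with only classical pre- and post-processing in between. Such a simulation passes through the oracle: a $\complexity{BPP}$ client that adaptively calls the weaker interface is turned into a client that calls the stronger interface the same number of times, still in probabilistic polynomial time. It therefore suffices to give, for each inclusion, a one-request simulation whose reply distribution matches exactly (zero total variation distance) and to check that the circuit transformation and the reply post-processing are polynomial time.

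\textbf{First inclusion.} Let the client ask the $\complexity{DQC1}$ interface to run an $n$-qubit circuit $U$ on $\ket0\otimes\ket{x'}$ with $x'$ uniform on $\{0,1\}^{n-1}$. We simulate this with $\halfBQPone$ by submitting the $n$-qubit circuit
\[
U' = U\,(X^{x_1}\otimes I)\cdots
\]
obtained as follows. Since the first input bit $x_1$ is random but revealed, we cannot reset it unitarily without mid-circuit measurement. The fix is to use an extra qubit: submit an $(n+1)$-qubit circuit acting on $\ket{x_0,x_1,\dots,x_n}$ that first applies $\mathrm{CNOT}$ from qubit~1 onto qubit~0, then applies $U$ to qubits $0,\dots,n-1$, skipping qubit~$n$. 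This is still not clean, so we use a simpler idea instead.

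Apply $U$ directly to $\ket{x_1}\otimes\ket{x_2\dots x_n}$ and record the reply $(x,y)$. If $x_1=0$, the run is exactly a $\complexity{DQC1}$ run with $x'=(x_2,\dots,x_n)$ uniform, and we output $y$. If $x_1=1$, we would rather not discard the sample, so we submit $U(X\otimes I)$ instead. Since this is decided before $x$ is drawn, we cannot adapt it to $x_1$. Rejection sampling is therefore the cleanest route: it succeeds with probability $1/2$, and after $O(\log(1/\varepsilon))$ repetitions the output is within $\varepsilon$ of the target, which fits the $\complexity{BPP}$ oracle model.

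An exact single-request alternative, which I would present as the main argument, is to submit $U\,(X\otimes I)$ controlled classically... Instead, submit the circuit $W=(I)\,U\,$ conjugated so that the first qubit is effectively $\ket{0}$: namely $C = \mathrm{CNOT}_{1\to\text{anc}}$ is not needed, because $\ket{x_1}$ with $x_1$ revealed can be relabelled. Note that $X^{x_1}$ applied before $U$ maps $\ket{x_1}$ to $\ket 0$. Its effect on the outcome of $U\ket{x}$ equals the outcome of $U(X^{x_1}\otimes I)\ket{x}$, which is not expressible in a fixed circuit. For this reason I will rely on rejection sampling. The main obstacle is exactly this point, that the clean qubit cannot be produced unitarily from a random bit. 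Rejection sampling over polynomially many adaptive requests resolves it within the oracle class, since the client resubmits until $x_1=0$ and fails only with probability $2^{-k}$ after $k$ tries.

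\textbf{Second inclusion.} This one is immediate. Submit the same circuit $U$ to $\halfBQP$ to receive $(x,y)$ with $y\in\{0,1\}^n$, and output $(x,y_1)$. Marginalizing the full computational-basis measurement onto the first qubit gives exactly the $\halfBQPone$ reply distribution, since measuring all qubits commutes with discarding the others. The post-processing is trivial.
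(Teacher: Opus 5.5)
Your second inclusion is exactly the paper's argument: submit $U$ to $\halfBQP$ and keep $(x,y_1)$. For the first inclusion your route is genuinely different. You rejection-sample on the revealed event $x_1=0$, which is valid because $\complexity{BPP}^{\complexity{Samp}\halfBQPone}$ permits polynomially many adaptive requests.

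The paper instead simulates each $\complexity{DQC1}$ request exactly with one $\halfBQPone$ request of the same $U$, and outputs $y\oplus x_1$. This works because $U$ maps the maximally mixed input to the maximally mixed output. So the first output bit is uniform when averaged over all $x\in\{0,1\}^n$, giving $\Pr[y=0\mid x_1=0]+\Pr[y=0\mid x_1=1]=1$. The $x_1=1$ runs are therefore precisely the bit-flipped $x_1=0$ runs. The ``main obstacle'' you name is true but irrelevant: no clean qubit has to be manufactured, and no sample has to be discarded.

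What each route buys:
\begin{itemize}
\item The paper's route gives exactness and a one-to-one correspondence of requests, with no failure probability to budget.
\item Your route is more general. Conditioning on revealed zeros extends to $q=O(\log n)$ designated clean bits at cost $2^{q}$; this is how the paper itself runs the $O(\log n)$-clean-qubit circuit in the proof of Theorem~\ref{thm:rr-attack}. The XOR identity only pins down the conditional law when a single input bit must be clean.
\end{itemize}

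If you keep your route, fix the framing. Your opening principle asks for an exact one-request simulation, which your first inclusion does not supply. Replace it with the accounting you actually need: cap each of the $T$ simulated calls at $k=\lceil\log_2(T/\delta)\rceil$ attempts, so the transcript is within total variation $\delta$ of the real one. Then absorb $\delta$ into the $2/3$ threshold by first amplifying the original client. Also delete the abandoned ancilla and ``controlled classically'' attempts.
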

\begin{proof}
To simulate one $\complexity{DQC1}$ request, the client makes one $\halfBQPone$ request of the same circuit, computes the output $y\oplus x_1$, and ignores the remaining bits $x_2\ldots x_n$. Since ignoring $x_1$ should result in the maximally mixed state, the runs with $x_1=1$ must be exactly the complement of the runs with $x_1 = 0$, so flipping $y$ whenever $x_1=1$ recovers the output corresponding to $x_1 =0$. To simulate one $\halfBQPone$ request, the client simply discards the last $n-1$ bits of the reply $y$ of a $\halfBQP$ request.
\end{proof}

\section{Security against factoring for $\halfBQPone$}
\label{sec:security}
Here, we define the factoring problem and state our main assumption.
\begin{definition}[Factoring]
    We define $\Factoring$ as the following decision problem. Given an $n$-bit semiprime $N$ and an integer $k \in [n]$, return the $k^{th}$ bit of the smaller prime factor of $N$, written with $n$ bits.
\end{definition}
Our conjecture, Conjecture~\ref{conj:main}, is that $\Factoring\notin\complexity{BPP}^{\complexity{Samp}\halfBQPone}$.
We could have considered the weaker conjecture that $\Factoring \not \in \complexity{BPP}^{\complexity{Samp}\complexity{DQC1}} $, but our purpose in proposing $\complexity{BPP}^{\complexity{Samp}\halfBQPone}$ is to start from $\Factoring \in \complexity{BPP}^{\complexity{Samp}\halfBQP} $ and find the strongest computational model that doesn't contain factoring. Thus, we isolate the ingredients necessary to factor.

We first show that one-bit readout removes the advantage of single-call factoring circuits in the near-term regime, then examine the asymptotic obstacle to implementing their classical decoder in randomized quantum workspace.

\subsection{Restricting the readout stops existing factoring algorithms out-of-the-box}
\label{sec:query-class}

The quantum stage of each standard quantum factoring algorithm has the following shape: Exponent
registers are prepared in an easy state, uniform or Gaussian; one
reversible arithmetic circuit computes a modular product into a work
register; the exponent registers are Fourier transformed; everything
is measured. Jacobs and Mehraban~\cite{jacobs2024half} observed that
this already allows factoring in $\halfBQP$. For Shor's circuit written with
Fourier transforms on both sides,
\begin{align}\label{eq:query-jacobs-mehraban}
 &A_{\mathrm{OF}}=(F_Q\otimes I)\,O_f\,(F_Q\otimes I),\\
 &O_f\ket{x,v}=\ket{x,\,v\oplus(a^x\bmod N)},
\end{align}
a random input $(u,v)$ only shifts the measured Fourier sample $y$
by the revealed $u$, so $u+y\bmod Q$ is an ordinary Shor sample and
the client decodes it classically. We now show that reading one bit
instead of $y$ removes the advantage of every circuit of this shape,
on every input, however the preparation and the postprocessing are
redesigned within the following family.

A \emph{single-call circuit} on $m$ qubits is
\begin{equation}\label{eq:query-forward-family}
 A=\mathcal N\,O_{\pi}\,V ,
\end{equation}
specified by three classical descriptions (Fig.~\ref{fig:interface}b). Firstly, $\mathcal N$ is a normalizer circuit~\cite{normalizer}, the generalization of a Clifford circuit from $\Z_2^m$ to an encoded group
$G=\prod_i\Z_{2^{t_i}}$ with $\sum_it_i=m$, given as a gate list; this class contains all
Clifford circuits and all products of cyclic Fourier transforms. Secondly, $O_\pi\ket g=\ket{\pi(g)}$ for a permutation $\pi$ of the basis
states, given as classical circuits for $\pi$ and $\pi^{-1}$: either an idealized arithmetic oracle such as $O_f$
above, granted to the client whether or not it could be implemented
without clean workspace, or any classical reversible (Toffoli) network for the arithmetic, whose action on every basis state, workspace included, is such a permutation. Finally, $V$ is a \emph{tractable preparation}:  any circuit supplied with classical algorithms that, for every basis input $s$, sample $V\ket s$ in the computational basis and evaluate its amplitudes, in the sense of Van den Nest's
computationally tractable states~\cite{vandenNest2011probabilistic}. This can, for instance, be a normalizer circuit over $G$, or a product of explicit circuits on $O(\log m)$-qubit blocks and single-qubit gates. All running times below are polynomial in the total description length.

This family contains the quantum stages of
Shor and Jacobs--Mehraban, Eker{\aa}--H{\aa}stad, Kitaev with deferred
measurements, and Regev with its space-optimized variants
(this is shown in Appendix~\ref{sec:readout-arithmetic}). It is interesting to note that the generalization beyond normalizer preparations for tractable preparations is only required to include the Regev construction; the remaining constructions sandwich the single call of $O_\pi$ with normalizer circuits. This separation of arithmetic access from the surrounding normalizer operations has been used before for analysing Shor's order-finding algorithm over black-box groups~\cite[Sec.~5.2]{bermejo2014blackbox}.

\begin{theorem}[One-bit readout of single-call circuits is classical]
\label{thm:query-class-sampler}
Let $A$ be a \textnormal{single-call} circuit on $m$ qubits, as in Eq.~\eqref{eq:query-forward-family}, and $s \in \{0,1\}^m$ any computational-basis
input. A probabilistic polynomial-time classical algorithm, given the description of $A$, samples the measurement outcome of the first qubit of $A\ket s$ to total variation error $\varepsilon$ in time polynomial in the description length and $1/\varepsilon$. Consequently, in $\halfBQPone$ the complete reply $(s,z)$ of such a request, and the complete transcript of $T$ adaptive such requests, are classically sampleable to total variation error $\delta$ in time polynomial in the input length, $T$ and $1/\delta$.
\end{theorem}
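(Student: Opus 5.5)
The plan is to work in the Heisenberg picture, conjugating the single-qubit measurement back through $\mathcal N$. This turns the outcome probability into the expectation of a few generalized Pauli (Weyl) operators on the state $O_\pi V\ket s$. Each such expectation can then be estimated by importance sampling from the tractable state $V\ket s$, in the manner of Van den Nest~\cite{vandenNest2011probabilistic}.

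\textbf{Step 1: rewrite the projector.} Write
\[
\Pr[z=0]=\bra{\psi}\mathcal N^\dagger \Pi_0 \mathcal N\ket{\psi},\qquad \ket\psi=O_\pi V\ket s,
\]
where $\Pi_0$ projects the first qubit onto $\ket 0$. That qubit is some bit $k$ of one cyclic factor $\Z_{2^{t}}$ of $G$. The function $g\mapsto(-1)^{\mathrm{bit}_k(g)}$ is a square wave on $\Z_{2^t}$. Its expansion in characters $\chi_h$ has coefficients $c_h$ that decay like $1/|h|$, so the $\ell_1$ norm is $\sum_h|c_h|=O(t)$. The coefficients are classically computable, and we can sample $h$ with probability proportional to $|c_h|$. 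Hence $\Pi_0=\tfrac12(I+\sum_h c_h Z(h))$ with $Z(h)$ a diagonal Weyl operator. If $k$ is the lowest bit, the sum has a single term.

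\textbf{Step 2: conjugate through $\mathcal N$.} By the Gottesman--Knill theorem for normalizer circuits~\cite{normalizer}, $\mathcal N^\dagger Z(h)\mathcal N=\omega\,X(g_h)Z(h')$. Here $g_h$, $h'$ and the phase $\omega$ are computed classically in polynomial time from the gate list.

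\textbf{Step 3: pull back through $O_\pi$.} Set $\ket\phi=V\ket s$. Because $O_\pi$ is a permutation, $O_\pi^\dagger X(g)Z(h')O_\pi$ is a monomial operator:
\[
\ket a\mapsto \chi_{h'}(\pi(a))\ket{\beta(a)},\qquad \beta(a)=\pi^{-1}(\pi(a)+g).
\]
Both $\beta$ and the phase are computable from the circuits for $\pi$ and $\pi^{-1}$. Therefore
\[
\bra\psi X(g)Z(h')\ket\psi=\sum_a |\phi(a)|^2\,\chi_{h'}(\pi(a))\,\frac{\overline{\phi(\beta(a))}}{\overline{\phi(a)}} .
\]
We sample $a$ from the computational-basis distribution of $\ket\phi$ and evaluate the ratio with the amplitude oracle of the tractable preparation. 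This gives an unbiased estimator. Since $\beta$ is a bijection, its second moment is $\sum_a|\phi(\beta(a))|^2=1$.

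\textbf{Step 4: combine and sample.} We sample $h$ with probability $|c_h|/\sum|c_h|$ and reweight, so the combined estimator of $\Pr[z=0]$ has variance $O(t^2)$. By Chebyshev, or a median-of-means bound, $O(t^2/\varepsilon^2)$ samples estimate $\Pr[z=0]$ to additive error $\varepsilon$. We then output a bit with that (clipped) probability, which gives total variation error $\varepsilon$.

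For the consequence, the server's input $s$ is uniform and independent of the circuit, so the client simulator draws $s$ itself and then samples $z$ as above. For $T$ adaptive requests we run each with error $\delta/T$. Total variation errors add along the adaptive transcript by the hybrid argument, giving error at most $\delta$ in time $\poly(T/\delta)$.

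The step I expect to be the main obstacle is Steps 1--2. One part is making the Weyl-operator conjugation rule for normalizer circuits over mixed groups $\prod_i\Z_{2^{t_i}}$ precise, including the phases. The other is controlling the non-character bit projector. If the $\ell_1$ bound in Step 1 fails for some encoding, I would instead use the binary expansion of $g$: bit $k$ is determined by $g \bmod 2^{k+1}$, and one could try to reduce to characters of the subgroup quotient $\Z_{2^{k+1}}$, which still gives an $O(k)$ $\ell_1$ norm.
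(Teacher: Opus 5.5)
Your proposal is correct and is essentially the paper's proof: expand $Z_1$ in characters of $\Z_{2^{k+1}}$ with $O(t)$ total $\ell_1$ weight, conjugate through $\mathcal N$ to efficiently computable monomial operators, estimate their expectations on the tractable state with Van den Nest's ratio estimator, then clip the estimated bias and spend error $\delta/T$ per adaptive request. Your deviations are cosmetic: you absorb $O_\pi$ into the operator rather than into the state, and you fuse the two sampling levels into one estimator with bounded second moment instead of a Hoeffding average over black-box CT estimates; the one step you assert without proof, sampling $h$ exactly proportional to $|c_h|$ over exponentially many odd indices, is supplied in the paper by a dyadic-band proposal $q$ with $|c_\ell|/q_\ell\le 2J$, and importance (or rejection) sampling from such a $q$ gives your $O(t^2)$ second-moment bound just as well.
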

Note that sampling from the measurement distribution of all qubits of $A_{\mathrm{OF}}\ket s$ should be classically hard, since such samples can be processed into a factor, and we assume $\Factoring \not \in \complexity{BPP}$.

The proof sketch proceeds as follows: the arithmetic permutation preserves tractability, so $\ket{\phi_s}=O_\pi V\ket{s}$ remains efficiently sampleable with efficiently computable amplitudes.
Writing $Z_1=\sum_\chi c_\chi D_\chi$ in character observables gives
\[
\beta(s)
=\bra{\phi_s}\mathcal N^\dagger Z_1\mathcal N\ket{\phi_s}
=\sum_\chi c_\chi
\underbrace{\bra{\phi_s}M_\chi\ket{\phi_s}}_{\text{classically estimable}},
\]
where each $M_\chi=\mathcal N^\dagger D_\chi\mathcal N$. For $G$, $\chi_k(g)$ are its Fourier characters, labeled by $k\in G$, and the corresponding diagonal unitary is $D{\chi_k}=\sum_{g\in G}\chi_k(g)\ket{g}\bra{g}$. Therefore, $M_\chi$ is an efficiently computable monomial unitary, making its expectation classically estimable by Van den Nest's sparse-operator result~\cite[Theorem~3]{vandenNest2011probabilistic}.
Since the character terms can be sampled efficiently and $\sum_\chi|c_\chi|=O(m)$, their weighted sum is also efficiently estimable, yielding a classical simulator for the released bit and for polynomially many adaptive requests. See Appendix~\ref{sec:readout-sampler} for detailed proof.

\subsection{Randomizing the input addresses the decoder attack}
\label{sec:decoder-attack}

Theorem~\ref{thm:query-class-sampler} concerns circuits whose
postprocessing is a normalizer, and in $\complexity{BQP}$ nothing
forces an adversary to stop there. If a Boolean decoder $f$ is applied
to the Fourier register before the bit is read, the observable becomes
$\mathcal N^\dagger D_f\mathcal N$ with $D_f=\sum_y(-1)^{f(y)}\ketbra yy$.
Each monomial expectation is still easy to estimate, but the total coefficient weight can be exponential. The proof therefore no longer guarantees efficient estimation of their sum.
This is exactly how Shor's algorithm decides a bit of a factor: with
clean ancillas the client appends the continued-fraction decoder
reversibly after the Fourier transform within the quantum computation, and measures one bit of its
output. What the
decoder needs, however, is initialized
workspace, and the random input in $\halfBQPone$ renders all qubits effectively maximally mixed (until the classical post-processing step).

However, we still have to be careful here: Barrington's theorem allows us to evaluate any logarithmic-depth classical circuit reversibly
with limited workspace, and logarithmic-depth classical computation covers a variety of common arithmetic circuits. Two obstacles then stand
between an attacker and a coherent decoder of this kind: (1) the decoders of the standard
algorithms, continued fractions for the Shor-type algorithms
and lattice reduction for the Regev-type ones, are not known
to be implementable in logarithmic depth; rational reconstruction in particular has been studied extensively and resisted prior attempts at parallelization, (2) the decoder cannot be applied directly to the output of Shor's since it has an unknown offset resulting from the random input. In the appendix, we show how to handle this second obstruction, and so one needs to assume that rational reconstruction is not parallelizable in order for security.

To make this precise, given a dyadic number $t\in[0,1)$, let $\mathrm{RR}(N,t)$ be the
denominator of the reduced fraction $c/d$ satisfying
\[
    0\le c<d<N,
    \qquad
    \left|t-\frac cd\right|<\frac{1}{2N^2},
\]
or zero if no such fraction exists. There is at most one such
fraction: two distinct fractions with denominators below $N$ are more
than $1/N^2$ apart. The assumption $\mathrm{RR}\in\complexity{NC}^1$
means that each output bit of this function has a uniform
polynomial-size Boolean circuit of logarithmic depth.

\begin{theorem}[rational reconstruction attack]
\label{thm:rr-attack}
$\mathrm{RR}\in\complexity{NC}^1\implies
 \Factoring\in\complexity{BPP}^{\complexity{SampDQC1}}
 \subseteq\complexity{BPP}^{\complexity{Samp}\halfBQPone}$.
\end{theorem}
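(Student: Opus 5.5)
The plan has three stages. First, reduce $\Factoring$ to a decision version of order finding by the standard classical randomized reduction (Miller/Shor): for random $a\in\Z_N^*$, with constant probability the order $r$ is even and $\gcd(a^{r/2}\pm1,N)$ yields a factor. Because the client is a $\complexity{BPP}$ machine making polynomially many adaptive requests, it suffices to learn the bits of a single good Shor sample, decoded to $r$, one bit per request. Concretely, I will aim to show that for any fixed $j$ the client can estimate, with inverse-polynomial bias, the probability that the $j$-th bit of $\mathrm{RR}(N,y/Q)$ equals $1$ for a Shor sample $y$. Since $\mathrm{RR}(N,y/Q)$ equals $r/\gcd(c,r)$ with constant probability, taking the lcm of a few recovered denominators gives $r$. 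This step runs classically inside the $\complexity{BPP}$ client, and the inclusion into $\complexity{BPP}^{\complexity{Samp}\halfBQPone}$ then follows from Theorem~\ref{thm:containments}.

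Second, build the $\complexity{DQC1}$ circuit for one such bit. The key observation is that a maximally mixed exponent register equals a uniform mixture of Fourier basis states. Running $O_f$ followed by the inverse $F_Q$ on a maximally mixed exponent register and a maximally mixed work register therefore produces, for each hidden offset $u$, the outcome $y=u+y_0 \bmod Q$, where $y_0$ is a (generalized) Shor sample. The work register being mixed only randomizes which coset of $\langle a\rangle$ is used, and this leaves the period unchanged. To remove the hidden offset I will keep an additional random register $W$. I will coherently add the exponent register into it before the Fourier stage, so that $W\mapsto W+u$, and then combine $W$ with the output register after the Fourier stage. This leaves the pair $(w,\,y_0+w)$ in registers that are jointly uniform in $w$, and an in-place reversible subtraction, which is a permutation and needs no clean workspace, leaves $y_0$ exposed. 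I will need to check that this bookkeeping commutes correctly with $F_Q$; this is the second obstruction mentioned before the theorem, which the statement says the appendix handles.

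Third, evaluate the $\complexity{NC}^1$ function $g(y_0)=\mathrm{RR}(N,y_0/Q)_j$ into the phase of the clean qubit. By Barrington's theorem, $g$ has a polynomial-length width-$5$ permutation branching program. Following Cleve-style constructions with dirty (arbitrary-state) ancillas, a branching program can be implemented as a controlled phase $(-1)^{g(y_0)}$ using only $O(1)$ borrowed ancillas, and these may be maximally mixed because the construction restores them. A Hadamard test on the clean qubit then yields $\mathbb E[(-1)^{g(y_0)}]$ as the bias of the single output bit. Repetition estimates this to inverse-polynomial precision, and that precision suffices because each good bit of $r$ occurs with constant probability mass. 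I would use the same dirty-ancilla trick to implement $O_f$ itself, or else grant it as in the single-call model.

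The main obstacle is the second stage: making the random offset cancel exactly, in a way that is compatible with the quantum Fourier transform and leaves no entanglement with a garbage register that would destroy the interference. My first attempt would be to work entirely with Fourier-basis mixtures and Hadamard-type phases, so that the offset appears as a known additive shift between two registers. Failing that, I would fall back on the eigenvalue-kickback (Kitaev) form of phase estimation, where the mixed work register is already a mixture of eigenvectors and only the control register needs care.
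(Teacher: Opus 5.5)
\textbf{The second and third stages cannot work as designed, and the obstruction is structural.} If $W$ starts in an unknown value $w$, then after $W\mapsto W+u$ and the Fourier stage the registers hold $(w+u,\,y_0+u)$. Subtracting leaves $y_0-w$, which is uniform and independent of $y_0$. The offset cancels only if $w=0$, i.e.\ only if $W$ is a clean $2n$-qubit register, and that is exactly what $\complexity{DQC1}$ withholds.

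More generally, in your design the clean qubit meets the rest of the machine only through a final Hadamard test. That test applies a diagonal phase $D_g=\sum_z(-1)^{g(z)}\ketbra{z}{z}$ after some unitary $U$ on the mixed registers. Every $U$ fixes the maximally mixed state, so the bias is
\[
2^{-m}\operatorname{Re}\operatorname{Tr}(U^\dagger D_gU)=2^{-m}\operatorname{Tr}D_g ,
\]
which does not depend on $a$ or $N$. No Shor sample is ever present in the mixed registers. So no dirty-ancilla bookkeeping can rescue a ``prepare the sample, then decode it'' strategy.

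The information must instead enter through clean qubits acting as phase-estimation controls, with the mixed register serving as a uniform mixture of eigenvectors. This is the paper's route.
\begin{itemize}
\item It uses the permutation $B_a(x,y)=(ax,\,a^{-1}y)\bmod N$ on valid residues. This is Gidney's dirty-ancilla bimultiplication, with $B_a^h=B_{a^h}$, and all of its eigenphases have denominators dividing $r$.
\item It never stores a phase estimate. Whenever an instruction of the Barrington program for the decoder predicate needs the $j$th bit of $\xi=\lfloor2^\mu((\theta+\tau)\bmod1)\rfloor$, that single bit is computed by phase estimation of $e^{2\pi\mathrm{i}2^{j-1}\tau}B_a^{2^{j-1}}$ on $O(\log n)$ clean qubits. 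The bit is used to control the instruction and is then uncomputed.
\item A random classical dyadic shift $\tau$ keeps each $2^{j-1}(\theta+\tau)$ away from the half-circle boundaries, so every bit is correct with high probability.
\item The Shor--Jordan reduction then trades the $O(\log n)$ clean qubits for one.
\end{itemize}
Your Kitaev fallback points in this direction. But ``only the control register needs care'' is where the whole difficulty lies: standard phase estimation needs $\Theta(n)$ clean controls, and streaming the phase bit by bit into the branching program is what removes them.

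\textbf{The first stage also has a gap.} Each request draws a fresh eigenphase or Shor sample. Estimating $\Pr[\text{bit}_j(\mathrm{RR})=1]$ therefore gives marginals of a mixture over denominators $d\mid r$, not the bits of one good sample, so there are no recovered denominators to take an lcm of. For example, with $r=6$ the denominators $1,2,3,6$ occur with probabilities $\tfrac16,\tfrac16,\tfrac13,\tfrac13$. The bit of weight $4$ is then set with probability only $\tfrac13$, although it is set in $r$.

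The paper instead uses the one-sided predicate $\mathbf 1\{\mathrm{RR}(N,t_\tau(\xi))\ge K\}$. Because every denominator divides $r$, the ideal acceptance probability is $0$ for $K>r$ and at least $\Pr(d=r)$ for $K\le r$. This probability is only inverse-polynomial, not constant: the paper's bound is $\gamma_n=\frac{2}{15(n+1)}$. Binary search over $K$, using $O(n)$ amplified threshold tests, then recovers $r$.
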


Since $\mathrm{RR}\in\complexity P$, a security proof
$\Factoring\notin\complexity{BPP}^{\complexity{Samp}\halfBQPone}$
would imply $\mathrm{RR}\notin\complexity{NC}^1$ and hence
$\complexity P\ne\complexity{NC}^1$. We therefore do not expect a
security theorem for $\halfBQPone$ without assumptions on the algorithms
an attacker may use. The proof of Theorem~\ref{thm:rr-attack} (Appendix~\ref{sec:rr-attack}) uses a
circuit that calls the arithmetic oracle polynomially many times, separated
by Fourier transforms on the control register, so it is not a
single-call circuit. This lines up with Theorem~\ref{thm:query-class-sampler}, which shows that a successful attack cannot be realized within the single-call family.

\section{Quantum simulation applications in $\halfBQPone$}

As mentioned, $\halfBQPone$ includes the well known one clean qubit model, $\complexity{DQC1}$. This ensures that all the progress made regarding simulation in $\complexity{DQC1}$ ports to $\halfBQPone$ -- this includes normalized traces and broadened spectral densities~\cite{knill1998};
average fidelity decay and Loschmidt echoes~\cite{poulin2004fidelity};
infinite-temperature multi-time correlations and out-of-time-order
correlators (OTOCs)~\cite{roymoulik2024correlations}; and suitably
normalized partition functions to additive
precision~\cite{chowdhury2021partition}. These correlators are of technical and societal importance, covering spectral functions measured in neutron-scattering experiments~\cite{scheie2021kpz}, the free-induction decay measured in nuclear magnetic resonance~\cite{lowe1957fid,seetharam2023nmr} and spin transport coefficients~\cite{kubo1957,bertini2021transport,joshi2022hydrodynamics} important in spintronics~\cite{zutic2004spintronics}. Thus, our interface still permits non-trivial quantum simulation, even as our theorem removes the order-finding advantage within the single-call family we analyzed.

More importantly, the additional readout in $\halfBQPone$ makes weighted normalized unitary traces natural simulation targets. Whether this gives additional computational power over $\complexity{DQC1}$ remains open. Section~\ref{sec:security-simulation} explains a complexity-theoretic barrier to proving such a separation, and Section~\ref{sec:classical-predictions} develops a candidate for separation that is also physically relevant.

\subsection{A barrier to separation}
\label{sec:security-simulation}

The distinction between unweighted and weighted normalized unitary trace estimation suggests a possible separation between $\complexity{DQC1}$ and $\halfBQPone$. One would hope for at least polynomial separations between the two, even if we can't prove complexity-theoretic separation. However, one has to be careful: it turns out that if the weight is computed by a classical circuit of logarithmic depth, then there is no distinction between these two models.

\begin{theorem}[Barrington separation barrier]
\label{thm:barrington-barrier}
Let $f:\{0,1\}^n \times \{0,1\} \to \{0,1\}$ be computable by a polynomial-size classical circuit of logarithmic depth, and let $p=\Pr[f(x,y)=1]$, where $(x,y)$ is the reply of a $\halfBQPone$ request with circuit $U$. Then there is a single $\complexity{DQC1}$ request, of polynomially larger circuit size, whose output bit $z$ satisfies
\[
 \Pr[z=1]=\tfrac12\,p+\tfrac14 .
\]
\end{theorem}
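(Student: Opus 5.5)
The plan is to write the acceptance probability of the $\halfBQPone$ request as a normalized trace of a single unitary. That unitary combines $U$ with a diagonal phase oracle for $f$. We then realize it with a Hadamard test on the one clean qubit, implementing the phase oracle without clean workspace via Barrington's theorem. The factor $\tfrac12$ and the offset $\tfrac14$ will come from averaging over one extra maximally mixed qubit.

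\emph{Step 1 (algebra).} Let $W=U^\dagger Z_1 U$, which is Hermitian with real diagonal entries, and set $\Pi^U_{y}=U^\dagger\ketbra{y}{y}_1U=\tfrac12\bigl(I+(-1)^yW\bigr)$. Then
\[
 1-2p=\frac1{2^n}\sum_{x,y}(-1)^{f(x,y)}\bra x\Pi^U_y\ket x
 =\operatorname{tr}(D_F)+\operatorname{tr}\bigl(D_F\,(Z_{y}\otimes W)\bigr).
\]
Here the extra qubit $y$ is adjoined, $D_F=\sum_{x,y}(-1)^{f(x,y)}\ketbra{x,y}{x,y}$, and $\operatorname{tr}$ denotes the trace normalized over the $n+1$ qubits. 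Both traces are real.

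We add one more maximally mixed selector qubit $s$ and define
\[
Q=\ketbra00_s\otimes D_F+\ketbra11_s\otimes D_F(Z_y\otimes W).
\]
Then the normalized trace is $\operatorname{tr}Q=(1-2p)/2$. The Hadamard test (H on the clean qubit $c$, controlled-$Q$, H, measure) gives $\Pr[z=0]=\tfrac12(1+\operatorname{tr}Q)=\tfrac34-\tfrac p2$. Hence $\Pr[z=1]=\tfrac12p+\tfrac14$, exactly as claimed.

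\emph{Step 2 (controlled-$Q$ from gates).} The factor $Z_y\otimes W=Z_y\otimes U^\dagger Z_1U$ is controlled on $c\wedge s$ as follows. We apply $U$ uncontrolled, then a doubly controlled $Z_y Z_1$ (a constant-size gate), then $U^\dagger$. When the control is off, $U^\dagger U$ cancels.

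Controlled-$D_F$ is the diagonal phase $(-1)^{c\wedge f(x,y)}$, and $c\wedge f$ is still computable in logarithmic depth. So it suffices to implement a phase oracle $(-1)^{g(\cdot)}$ for an $\complexity{NC}^1$ function $g$ acting on a register that contains no clean qubits.

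\emph{Step 3 (the main obstacle: phase oracle with dirty workspace).} We adjoin $O(1)$ further registers, which are maximally mixed (``dirty'') and must be returned exactly to their initial values so that they contribute only a factor to the normalized trace. By Barrington's theorem, $g$ has a width-$5$ permutation branching program of length $4^{O(\log n)}=\poly(n)$. Ben-Or and Cleve's construction turns such a program into a reversible circuit that realizes
\[
\ket{x}\ket{t}\ket{a}\mapsto\ket{x}\ket{t\oplus g(x)}\ket{a}
\]
for \emph{arbitrary} initial contents of a target qubit $t$ and a constant number of auxiliary registers $a$. This works by pairing each step with its inverse and using commutator identities, so every auxiliary value is restored.

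The phase is then produced from the dirty target by applying $Z_t$, the toggle, $Z_t$, and the toggle again. This contributes $(-1)^{t}(-1)^{t\oplus g(x)}=(-1)^{g(x)}$ and leaves $t$ unchanged.

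I expect the bulk of the care to go here. One must check that the Ben-Or--Cleve toggle is exact, not merely correct on a clean ancilla. One must also check that every auxiliary qubit is returned, so that the extra mixed registers factor out of the trace and do not dilute $\operatorname{tr}Q$. Finally, the total size must be polynomial, which is where logarithmic depth is essential.

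Assembling Steps 1--3 gives a single $\complexity{DQC1}$ circuit of size polynomial in $|U|$ and the size of the circuit for $f$. Its output bit satisfies $\Pr[z=1]=\tfrac12p+\tfrac14$.
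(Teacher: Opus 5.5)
Your proof is correct, and it reaches the same relation $\Pr[z=1]=\frac12p+\frac14$ by a genuinely different route.

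\textbf{The paper's route.} The paper does not use a trace identity or a Hadamard test; it uses a guess-and-check circuit.
\begin{itemize}
\item A maximally mixed bit $Y$ serves as a guess for the measured output bit.
\item The borrowed-workspace oracle of Lemma~\ref{lem:nc1-borrowed-oracle} writes $f(x,y)$ into the clean qubit $Z$, and then $U$ is applied once.
\item A CNOT from the output qubit into $Y$ flags a wrong guess. On a wrong guess, $Z$ is swapped with a borrowed qubit that is still maximally mixed, so $Z$ becomes a fair coin.
\item The relation $\Pr[z=1]=\frac12p+\frac12\cdot\frac12$ then follows by conditioning on whether the guess was right.
\end{itemize}
That argument is essentially classical once the measurement is deferred. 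It uses $U$ once and a single oracle call, and needs neither $U^\dagger$ nor phase kickback.

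\textbf{Your route.} You write $1-2p$ as a sum of two normalized traces, average them with the selector qubit $s$, and read off $\operatorname{tr}Q$ with a Hadamard test. This costs more: $U$ and $U^\dagger$, two toggles, and an extra mixed qubit. In return it exhibits the quantity explicitly as the normalized trace of a single unitary. That matches the complete-problem framing of weighted versus unweighted normalized traces. It also gives directly $\frac{1}{2^n}\sum_x(-1)^{g(x)}\operatorname{Re}\bra x U\ket x=\operatorname{Re}\operatorname{tr}(D_gU)$ for any $\mathsf{NC}^1$ sign weight $g$.

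\textbf{Shared ingredient.} Both proofs rest on the same key fact: exact $\mathsf{NC}^1$ evaluation with constant borrowed workspace, restored on every basis state.

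\textbf{On your Step~3.} Ben-Or and Cleve work from a log-depth formula, not from Barrington's width-5 program. Their offset programs realize $t\mapsto t\oplus g(x)\,a_1$ for a source register $a_1$. With a dirty $a_1$ you therefore get the wrong toggle unless you run the program, flip $a_1$, run it again, and flip $a_1$ back. Alternatively, use the $A_8$ form of Barrington's theorem with target permutation $X\otimes I_4$, as the paper's lemma does. With that fix, your toggle is exact for arbitrary auxiliary contents, and the $Z_t$, toggle, $Z_t$, toggle sequence gives exactly $(-1)^{g(x)}$ on every basis state.
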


This also implies, for the decision classes, $
 \complexity{P}=\complexity{NC}^1
 \quad\Longrightarrow\quad
 \complexity{DQC1}=\halfBQPone.$ Consequently, finding a separation between $\halfBQPone$ and $\complexity{DQC1}$ is as hard as finding a separation between $\complexity{P}$ and $\complexity{NC}^1$ which is a longstanding open problem. The proof of the above theorem is presented below.
\begin{proof}
Initialize $X,Y,W$ in the maximally mixed state and $Z$ in
$\ket0$. The mixed bit $y$ in $Y$ serves as a guess for the
first output bit of $U$. Apply $O_f$ (Lemma~\ref{lem:nc1-borrowed-oracle}, Appendix~\ref{sec:rr-attack}) to write $f(x,y)$ into $Z$,
then apply $U$ to $X$.

Next apply $\mathrm{CNOT}_{1\to Y}$, so that $Y$ records whether
the guess disagrees with the first output bit. Controlled on
$Y=1$, swap $Z$ with the first borrowed workspace qubit $W_1$.
This qubit is still independently maximally mixed, since $O_f$
restores its borrowed workspace. Finally, measure $Z$.

The guess is correct with probability $1/2$. Conditioned on
a correct guess, $(x,y)$ has exactly the original
$\halfBQPone$ reply distribution, and $Z$ contains $f(x,y)$.
For an incorrect guess, $Z$ is a fair bit. Thus, if the
original acceptance probability is $p$, the new acceptance
probability is
\[
    p'=\frac12p+\frac14
       =\frac12+\frac12\left(p-\frac12\right).
\]
The acceptance gap decreases by only a factor of two, so estimating $\Pr[z=1]$ to half the target error and rescaling recovers $p$ with a constant-factor increase in the number of requests.
The construction uses one clean qubit and one final measured
bit, and therefore is a $\complexity{DQC1}$ circuit.
\end{proof}

\subsection{Testing classical predictions of quantum spin dynamics}
\label{sec:classical-predictions}
By Theorem~\ref{thm:barrington-barrier}, a separation requires weight functions that are unlikely to be computable in logarithmic depth. We therefore consider physical problems whose observables naturally take a weighted form with such weights. Consider the following candidate problem: \emph{Given the initial spin configuration, how well can a classical model predict the spin measured at a particular site after quantum evolution?} Related questions arise in spin transport~\cite{shi2024hydrodynamics} and quantum domain coarsening~\cite{manovitz2025coarsening}.

Fix an efficiently implementable quantum evolution $U(t)$ and a site $i$. Let $f(x)\in\{+1,-1\}$ be a classical model's prediction for the measured spin, computed in polynomial time from the initial configuration $x$. In each $\halfBQPone$ run, the server prepares a uniformly random $\ket{x}$, evolves it with $U(t)$, and measures $Z_i$. The client receives $x$ and the measured spin $S_i=\pm1$, computes $f(x)$ classically, and scores their agreement by $f(x)S_i$. Averaging gives
\begin{equation}
\label{eq:classical-quantum-prediction}
\begin{aligned}
 \mathbb{E}[f(X)S_i]
 &=\frac{1}{2^n}\sum_x f(x)
   \bra{x}U(t)^\dagger Z_iU(t)\ket{x},\\
 \Pr[S_i=f(X)]&=\frac{1+\mathbb{E}[f(X)S_i]}{2}.
\end{aligned}
\end{equation}
The second line gives the prediction's agreement rate, averaged over initial configurations and measurement outcomes. Taking $f(x)=(-1)^{x_i}$ recovers the usual spin autocorrelation. The bounded scores give additive error $\varepsilon$ with $O(\varepsilon^{-2})$ runs at fixed confidence, and the same records can test several classical models.

For a candidate separation, we seek prediction functions $f$ in $\complexity{P}$ that are plausibly outside $\complexity{NC}^1$, beyond the construction of Theorem~\ref{thm:barrington-barrier}. A concrete family comes from Moore's three-dimensional \emph{Majority-Vote Cellular Automata Prediction} problem~\cite[Theorem~1]{moore1997majority}. On a cubic-lattice box of $n$ spins, with exterior spins held at $-1$, each synchronous update replaces every spin by the majority of itself and its six nearest neighbors. Given the initial configuration $x$, a target site $i$, and a time $\tau\le n$, the problem asks for the resulting spin at $i$; fixing $i$ and $\tau$ defines our classical weight $f(x)$. This prediction problem is $\complexity{P}$-complete, so its prediction functions cannot all lie in $\complexity{NC}^1$ assuming $\complexity{P}\ne\complexity{NC}$. These classical predictions provide natural weights for candidate weighted-trace separations between $\halfBQPone$ and $\complexity{DQC1}$. A separation would still require showing that estimating the weighted average over uniformly random $x$ is hard for $\complexity{DQC1}$; the worst-case hardness of evaluating $f$ alone does not establish this.

\section{A finite-size regime without input randomization}
\label{sec:finite-size}

Section~\ref{sec:security} explained why a server that prepares clean inputs and releases one designated bit is protected against every known factoring algorithm run out-of-the-box, and why the only known escape is to run the decoder coherently so that the released bit is a bit of the factor. Randomizing the input register is one way to prevent implementation of such a decoder on an asymptotic scale. However, at finite-size regimes, it may be that the polynomial overhead in implementing such a decoder coherently is itself enough to disallow factoring. Here we consider such an argument.

We first consider a resource estimate for coherent continued-fraction decoder. Such a decoder runs the extended Euclidean algorithm on the $2n$-bit Fourier sample $y$ and $2^{2n}$, tracking the denominators of the convergents until one would exceed $N$. Its registers are the Fourier register itself and a second remainder register ($4n$ qubits), two convergent denominators bounded by $N$ ($2n$), a quotient register ($2n$) and the scratch for a $2n$-bit division ($2n$ to $4n$), plus the output bit and a constant number of control flags. The quotient of each Euclidean step is uncomputed from the updated denominators, $a_k=\lfloor q_{k+1}/q_k\rfloor$, so no computation history accumulates over the $O(n)$ steps and the same registers are reused throughout~\cite[Sec.~5]{proos2003shor}. The total is $10n$ to $12n$ logical qubits, or roughly $20{,}000$ to $25{,}000$ at RSA-2048. The quantum stage that produces $y$ is budgeted at a few thousand logical qubits or fewer in the most qubit-efficient current proposals~\cite{gidney2021factoring,gidney2025factoring}, so the decoder, not the arithmetic, sets the size of a one-bit-readout attack. Thus, against attacks that are not further optimized, a machine with ten thousand logical qubits that releases one output bit may be as secure against factoring as a machine with one thousand logical qubits that releases full output samples (of course, unlike Shor's algorithm, optimizing the clean qubit usage of continued fractions is not a previously studied problem, so there may be great scope for simple optimizations we have not considered).

There is compelling evidence that this workspace is necessary.
By Barrington's theorem, a function has logarithmic-depth circuits exactly when a polynomial-length reversible program on constant workspace, driven by read-only input bits, computes it~\cite{barrington1989}. Theorem~\ref{thm:rr-attack} is the forward direction applied to factoring: shallow rational reconstruction gives a coherent decoder on $O(\log n)$ clean qubits, and on one after the Shor--Jordan reduction. The reverse direction says that a constant-workspace coherent decoder of this form would put rational reconstruction in $\complexity{NC}^1$. The decades of failed attempts to parallelize rational reconstruction are therefore also decades of failed attempts to decode with classical reversible programs on constant workspace. This suggests that there may always be some non-trivial overhead in coherently implementing the decoder.

\section{Discussion}

An apparent obstacle to protecting publicly accessible quantum
computers is obfuscation: an untrusted user can disguise an
algorithm's purpose or distribute it across requests. Our approach
constrains the computational power available through a security
interface. If the full interaction, including adaptive requests and
classical post-processing, lies in a complexity class that excludes
a task, no permitted algorithm can solve it efficiently, however it
is disguised. Under the stated complexity-theoretic assumptions,
this protection persists as quantum computers scale up.

Our work provides a security motivation for using restricted models like
$\halfBQPone$ or $\complexity{DQC1}$ on fully fault-tolerant quantum computers.
Such models have often been studied for their theoretical
interest or as models of computation with limited
resources~\cite{knill1998}. Here, restricting access is a
deliberate choice even when the computer can implement
arbitrary quantum computations. Thus, we provide motivation
for further study of useful quantum applications in
$\halfBQPone$, and of cryptographic and
complexity-theoretic arguments against efficient factoring
through this interface.

\subsection{Future Directions}

We pose a broader question for theoretical computer science:
under what conditions can a powerful server, capable of solving problems
in a complexity class $A$, interact with a less powerful client, whose
local computation is restricted to a class $B$, so as to give the client
full access to an intermediate class $C$, where
$B \subsetneq C \subsetneq A$? This prevents even a malicious client from using the interaction to solve problems outside $C$.
This question connects to the study of knowledge complexity, which
investigates the computational advantage gained through
interaction~\cite{GMR89,GP99}, and to simulation-based security, which
formalizes the requirement that an adversary obtain no capabilities
beyond those provided by a prescribed ideal
functionality~\cite{Canetti01}.
Here, the objective is to characterize the computational power made
available to the client, with both a guarantee of access to $C$ and a
matching upper bound against malicious behavior.

This suggests studying other intermediate complexity classes
as targets for security interfaces. The challenge is to identify
useful tasks they retain, cryptanalytic tasks they exclude, and
preparation, control, and readout constraints that enforce these
limits even under adversarial adaptive use.\\

Deliberate noise injection offers a complementary direction. We can ask how a
specified noise model restricts the accessible computational
power, and whether it suppresses cryptanalytic uses while
retaining useful simulation. Any such guarantee must account
for the user's ability to correct or mitigate the noise.
Understanding these restrictions provides an alternative to
seeking finer intermediate models of computations, or even complexity classes.

Finally, we have focused on only one cryptographic risk posed by quantum computers: attacks on factoring-based cryptography. Other primitives, including Diffie–Hellman key exchange and elliptic-curve digital signatures, are vulnerable to quantum algorithms for discrete logarithms~\cite{shor1997,nist2024transition}. Beyond cryptography, potential risks include the misuse of quantum simulation to support harmful chemical design and weapons development~\cite[Box~9]{oecd2025quantum}. Finding security interfaces, or complementary strategies, that mitigate these risks while preserving beneficial applications is a crucial and promising direction for future inquiry.

\section{Acknowledgments} We thank Dolev Bluvstein and Maddie Cain for helpful discussions that motivated this work.
S.A.Y. thanks Yifan Hong, Mauro Morales, Shrigyan Brahmachari, Samuel Tan, Emil Khabiboulline and Kishor Bharthi for useful discussions.

\par
\onecolumngrid
\vspace{12pt}
\bibliographystyle{apsrev4-2}
\bibliography{references}
\onecolumngrid
\clearpage
\begin{center}
\Large Appendix
\end{center}
\appendix
\onecolumngrid

\section{Classical simulation of one-bit readout}
\label{sec:readout-sampler}

We prove Theorem~\ref{thm:query-class-sampler} by expressing the
measured bit's expectation as an efficiently sampled average of
classically estimable quantities. This avoids simulating the full
output state of $A=\mathcal N O_\pi V$.

\emph{Tractable states and monomial operators.}
A state $\ket\phi$ is \emph{computationally tractable} (CT) if
classical algorithms can sample its basis distribution to total
variation error $2^{-b}$ and evaluate its amplitudes to additive
error $2^{-b}$ in time polynomial in its description length and $b$,
using a consistent global phase~\cite{vandenNest2011probabilistic}.
A tractable preparation supplies these algorithms for $V\ket s$
uniformly in the basis input $s$.

An efficiently invertible basis permutation preserves tractability:
samples are relabelled by $\pi$, and the amplitude at $h$ becomes
$\phi_{\pi^{-1}(h)}$. Products of CT states are also CT.
Normalizer preparations and products of explicit
$O(\log m)$-qubit circuits are tractable~\cite{normalizer}.

A \emph{monomial unitary} permutes basis states and attaches phases:
$P\ket g=\omega(g)\ket{\sigma(g)}$, where $|\omega(g)|=1$.
We require efficient evaluation of $\sigma$, $\sigma^{-1}$,
and $\omega$. Its expectation on a CT state is efficiently
estimable by Van den Nest's sparse-operator theorem
\cite[Theorem~3]{vandenNest2011probabilistic}. The next lemma
combines even exponentially many such expectations.

\begin{lemma}[Sampling a sum of monomial expectations]
\label{lem:ct-monomial}
Let $\ket\phi$ be CT and let
$H=\sum_j c_jP_j$ be Hermitian, where the $P_j$ are efficiently
computable monomial unitaries. Suppose there is an efficiently
sampleable distribution $q$ with $q_j>0$ whenever $c_j\ne0$, and
the weights $c_j/q_j$ are efficiently computable and satisfy
\[
 \frac{|c_j|}{q_j}\le\Lambda,\qquad \Lambda\ge1.
\]
Then $\bra\phi H\ket\phi$ is classically estimable to additive
error $\varepsilon$, with failure probability at most $\zeta$, in
time polynomial in the description length, $\Lambda$,
$1/\varepsilon$, and $\log(1/\zeta)$.
\end{lemma}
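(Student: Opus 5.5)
We estimate $\bra\phi H\ket\phi$ by importance sampling over the index $j$, with each sampled monomial expectation estimated by a bounded single-shot estimator. Write
\[
 \bra\phi H\ket\phi=\sum_j c_j\bra\phi P_j\ket\phi
 =\mathbb E_{j\sim q}\!\left[\frac{c_j}{q_j}\bra\phi P_j\ket\phi\right],
\]
which is valid because $q_j>0$ whenever $c_j\neq0$. Since $H$ is Hermitian the left side is real, so it suffices to estimate the real part of the right side.

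\textbf{Single-shot estimator for a monomial expectation.} This is the core of Van den Nest's sparse-operator argument~\cite[Theorem~3]{vandenNest2011probabilistic}. For a monomial $P\ket g=\omega(g)\ket{\sigma(g)}$,
\[
 \bra\phi P\ket\phi=\sum_g \overline{\phi_{\sigma(g)}}\,\omega(g)\,\phi_g
 =\mathbb E_{g\sim|\phi_g|^2}\!\left[\omega(g)\,\frac{\overline{\phi_{\sigma(g)}}}{\overline{\phi_g}}\right].
\]
The ratio is unbounded, so rather than using it directly I would use the standard bounded estimator. Sample $g\sim|\phi_g|^2$, put $h=\sigma(g)$, and output
\[
 X=\omega(g)\,\overline{\phi_h}/\overline{\phi_g}\cdot\min\bigl(1,|\phi_g|^2/|\phi_h|^2\bigr)
\]
combined with the symmetric term obtained by sampling $h$ and mapping back through $\sigma^{-1}$. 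The cleaner route, and the one I would actually write, is to cite Theorem~3 as providing a randomized procedure whose output $Y_j$ satisfies $|Y_j|\le 1$ (up to an $O(1)$ constant) and $|\mathbb E[Y_j]-\bra\phi P_j\ket\phi|\le\eta$. Here $\eta$ is driven to $2^{-b}$ by using amplitude precision $b$, and the approximate sampling and amplitude errors of the CT oracles are absorbed into this bias. Efficient evaluation of $\sigma$, $\sigma^{-1}$ and $\omega$ is precisely what makes the procedure polynomial-time.

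\textbf{Combined estimator.} One trial draws $j\sim q$ and outputs $Z=\operatorname{Re}\bigl[(c_j/q_j)Y_j\bigr]$. Then $|Z|\le\Lambda$ and $|\mathbb E Z-\bra\phi H\ket\phi|\le\Lambda\eta$. The sampler for $q$ is also only approximate; if it has total-variation error $\tau$, that contributes at most a further $2\Lambda\tau$ of bias. Choosing $\eta,\tau\le\varepsilon/(8\Lambda)$ keeps the total bias below $\varepsilon/2$, at a cost of only $O(\log(\Lambda/\varepsilon))$ bits of precision. Averaging $K=O(\Lambda^2\varepsilon^{-2}\log(1/\zeta))$ independent trials gives deviation at most $\varepsilon/2$ with probability $1-\zeta$ by Hoeffding, since the $Z$ lie in $[-\Lambda,\Lambda]$. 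The total time is $K$ times the polynomial per-trial cost, which is polynomial in $\Lambda$, $1/\varepsilon$, $\log(1/\zeta)$ and the description length.

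\textbf{Main obstacle.} The delicate point is the single-monomial step: obtaining an estimator that is both bounded and has controlled bias when only approximate amplitudes and approximate samples are available. Near-zero $|\phi_g|$ makes naive ratios blow up. I would handle this by citing Van den Nest's theorem in its "bounded-output, inverse-exponential bias" form rather than re-deriving it. If that form has to be justified, I would use the fact that the ratio estimator's second moment is $\sum_g|\phi_{\sigma(g)}|^2=1$. That gives bounded variance, and bounded variance suffices in place of boundedness if Hoeffding is replaced by a median-of-means step.
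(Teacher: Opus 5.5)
Your proof is correct and shares the paper's outer structure: importance sampling $j\sim q$, summands bounded by $\Lambda$, and Hoeffding with $O(\Lambda^2\varepsilon^{-2}\log(1/\zeta))$ samples. The difference is the inner step.

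\textbf{How the inner step differs.} The paper nests the estimation. For each sampled index it invokes Van den Nest's theorem as a black box to estimate $\mu_j=\bra\phi P_j\ket\phi$ to accuracy $O(\varepsilon/\Lambda)$, and it union-bounds the inner failure probabilities to $\zeta/2$. You instead take one bounded single-shot sample per index and fold the inner error into the bias.

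\textbf{What Theorem~3 actually gives.} As the paper cites it, Van den Nest's Theorem~3 is an estimation guarantee, not a bounded one-shot estimator. To extract your $Y_j$ from it, run it to accuracy $\eta$ with failure probability $\eta$ and project onto the unit disk. This gives $|Y_j|\le1$ and $|\mathbb E Y_j-\mu_j|\le3\eta$. However, each trial then costs $\operatorname{poly}(\Lambda/\varepsilon)$ rather than a few extra bits of precision, and your algorithm becomes essentially the paper's.

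\textbf{A genuinely single-shot estimator.} For that you have to write one out. Your $\min(1,|\phi_g|^2/|\phi_h|^2)$ piece needs a companion weighted so that the two expectations sum to each term $t_g=\omega(g)\overline{\phi_{\sigma(g)}}\phi_g$. A naive symmetric companion with $\min(1,|\phi_h|^2/|\phi_g|^2)$ overcounts. The clean choice is to sample $g$ from $w_g=\tfrac12(|\phi_g|^2+|\phi_{\sigma(g)}|^2)$, i.e.\ a fair mixture of sampling $\ket\phi$ directly and sampling it then applying $\sigma^{-1}$, and output $t_g/w_g$. This is unbiased, and has modulus at most $1$ by AM--GM.

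\textbf{Precision and the fallback.} With approximate amplitudes, an $m$-qubit state needs $b=O(m+\log(\Lambda/\varepsilon))$ bits, not just $O(\log(\Lambda/\varepsilon))$. The additive amplitude error must be small compared with amplitudes of size $2^{-m/2}$. Your median-of-means fallback, using the ratio estimator with second moment at most $1$, is also sound once you truncate wherever the computed $|\phi_g|$ falls below the precision scale.

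\textbf{What each route buys.} Yours gains efficiency: $O(\Lambda^2\varepsilon^{-2}\log(1/\zeta))$ amplitude evaluations in total, instead of that many full inner estimations. The paper's gives a proof that uses the CT theorem purely as a black box.
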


\begin{proof}
Write $\mu_j=\bra\phi P_j\ket\phi$. Sampling $j\sim q$ gives
\[
 \bra\phi H\ket\phi
 =\mathbb E_{j\sim q}
   \operatorname{Re}\!\left(\frac{c_j}{q_j}\mu_j\right).
\]
Since $|\mu_j|\le1$, each summand is bounded by $\Lambda$.
The Hoeffding bound requires only
$O(\Lambda^2\varepsilon^{-2}\log(1/\zeta))$ samples.
Estimate each sampled $\mu_j$ to accuracy
$O(\varepsilon/\Lambda)$ using the finite-precision CT theorem,
with failure probabilities summing to at most $\zeta/2$.
Round the weights to accuracy $O(\varepsilon)$ and allocate the
remaining error to the outer average. Every step takes polynomial
time.
\end{proof}

\emph{The Fourier expansion of one bit.}
Let $G=\prod_i\mathbb Z_{2^{t_i}}$, with $\sum_i t_i=m$.
Put the measured register first and let $\rho$ be the measured
digit, counted from the least significant digit $0$.
For $M=2^{\rho+1}$, its $Z$ observable is $+1$ on the first half
of the residues $g_1\bmod M$ and $-1$ on the second half.

\begin{lemma}[Fourier coefficients of a binary digit]
\label{lem:digit-characters}
For $\chi_\ell(g)=e^{2\pi\mathrm{i}\ell g_1/M}$ and
$D_\chi\ket g=\chi(g)\ket g$,
\[
 Z_1=\sum_{\substack{0<\ell<M\\\ell\ \mathrm{odd}}}
       c_\ell D_{\chi_\ell},
 \qquad
 c_\ell=\frac{4}{M(1-e^{-2\pi\mathrm{i}\ell/M})},
 \qquad
 |c_\ell|\le\frac1{\min(\ell,M-\ell)}.
\]
\end{lemma}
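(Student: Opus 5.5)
The plan is to treat $Z_1$ as a function of $r=g_1\bmod M$ alone and expand it in the characters of $\mathbb Z_M$. The function is $h(r)=+1$ for $0\le r<M/2$ and $-1$ for $M/2\le r<M$. Since $D_{\chi_\ell}$ acts diagonally by $e^{2\pi\mathrm{i}\ell g_1/M}$, which depends only on $g_1\bmod M$, it suffices to prove the scalar identity
\[
h(r)=\sum_{\ell=0}^{M-1}c_\ell e^{2\pi\mathrm{i}\ell r/M},\qquad c_\ell=\frac1M\sum_{r=0}^{M-1}h(r)e^{-2\pi\mathrm{i}\ell r/M}.
\]
This is the standard Fourier inversion on $\mathbb Z_M$.

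The coefficients are computed as follows. Let $\omega=e^{-2\pi\mathrm{i}\ell/M}$. Then
\[
c_\ell=\frac1M\Bigl(\sum_{r<M/2}\omega^r-\sum_{r\ge M/2}\omega^r\Bigr)=\frac1M\bigl(1-\omega^{M/2}\bigr)\sum_{r<M/2}\omega^r .
\]
For $\ell=0$ this vanishes, since $h$ has mean zero. For $\ell\neq0$ we have $\omega^{M/2}=(-1)^\ell$.
\begin{itemize}
\item If $\ell$ is even, the factor $1-\omega^{M/2}$ is zero, so $c_\ell=0$.
\item If $\ell$ is odd, $\omega\neq1$ and the geometric sum gives $\sum_{r<M/2}\omega^r=(1-\omega^{M/2})/(1-\omega)=2/(1-\omega)$. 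Hence $c_\ell=\frac{1}{M}\cdot 2\cdot\frac{2}{1-\omega}=\frac{4}{M(1-e^{-2\pi\mathrm{i}\ell/M})}$, which is the claimed formula.
\end{itemize}

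For the magnitude bound, write $|1-e^{-2\pi\mathrm{i}\ell/M}|=2\sin(\pi\ell/M)$. Set $k=\min(\ell,M-\ell)$, so that $\sin(\pi\ell/M)=\sin(\pi k/M)$ with $\pi k/M\in(0,\pi/2]$. Jordan's inequality $\sin\theta\ge 2\theta/\pi$ on $[0,\pi/2]$ then gives $\sin(\pi k/M)\ge 2k/M$. Therefore
\[
|c_\ell|\le\frac{4}{M\cdot 2\cdot 2k/M}=\frac1k .
\]

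The only step needing care is the bound: one must fold $\ell$ to $\min(\ell,M-\ell)$ so that Jordan's inequality applies on $[0,\pi/2]$. Everything else is a routine geometric-series computation. The edge case $M=2$ ($\rho=0$) is covered: there $\ell=1$, $c_1=1$, and the identity reduces to $Z=D_{\chi_1}$.
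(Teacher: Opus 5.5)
Your proposal is correct and follows the paper's proof essentially line for line. You expand the digit observable in characters of $\mathbb Z_M$, factor the difference of the two half-range geometric sums into $(1-(-1)^\ell)^2/\bigl(M(1-e^{-2\pi\mathrm{i}\ell/M})\bigr)$ so that only odd $\ell$ survive, and obtain the bound from $|1-e^{-2\pi\mathrm{i}\ell/M}|=2\sin(\pi\min(\ell,M-\ell)/M)\ge 4\min(\ell,M-\ell)/M$; your observation that both $Z_1$ and $D_{\chi_\ell}$ depend only on $g_1\bmod M$, which reduces the operator identity to a scalar one, is a step the paper leaves implicit.
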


\begin{proof}
The constant coefficient is zero. For $\ell\ne0$, the coefficient
is the difference of the two geometric sums over the two halves
of the residues:
\[
 c_\ell=\frac1M\sum_{x=0}^{M-1}
 (-1)^{\lfloor2x/M\rfloor}e^{-2\pi\mathrm{i}\ell x/M}
 =\frac{(1-(-1)^\ell)^2}
        {M(1-e^{-2\pi\mathrm{i}\ell/M})}.
\]
Only odd $\ell$ contribute. The bound follows from
$|1-e^{-2\pi\mathrm{i}\ell/M}|
 =2\sin(\pi\min(\ell,M-\ell)/M)
 \ge4\min(\ell,M-\ell)/M$.
\end{proof}

\begin{proof}[Proof of Theorem~\ref{thm:query-class-sampler}]
Fix any basis input $s$. The state
$\ket{\phi_s}=O_\pi V\ket s$ is CT by the closure property above.
Normalizer conjugation takes a character observable to an
efficiently computable monomial unitary,
$P_\ell=\mathcal N^\dagger D_{\chi_\ell}\mathcal N$
\cite[Theorems~2, 7--8 and Corollary~4]{normalizer}.
The measured bit therefore has bias
\[
 \beta(s)=\bra sA^\dagger Z_1A\ket s
 =\operatorname{Re}\sum_{\ell\ \mathrm{odd}}
 c_\ell\bra{\phi_s}P_\ell\ket{\phi_s}.
\]

Although this sum can have exponentially many terms, we can sample
them with polynomially bounded weights. Group the odd indices into
bands
\[
 \mathcal L_a=
 \{\ell:0<\ell<M,\ \ell\text{ odd},\quad
             2^a\le\min(\ell,M-\ell)<2^{a+1}\}.
\]
Let $J\le\log_2M\le m$ be the number of nonempty bands.
Choose a band uniformly and then choose an index uniformly within
it. Each band consists of at most two intervals of odd integers,
so this sampling and the computation of its size are efficient.
For $\ell\in\mathcal L_a$,
\[
 q_\ell=\frac1{J|\mathcal L_a|},
 \qquad
 \frac{|c_\ell|}{q_\ell}
 \le J|\mathcal L_a|2^{-a}\le2J\le2m.
\]
The weights are computable to the required accuracy with
polynomially many bits, since $M$ has only $O(m)$ bits.
Lemma~\ref{lem:ct-monomial} now estimates $\beta(s)$ in polynomial
time, uniformly in $s$.

To obtain a sample, estimate the bias to error $\varepsilon/2$ with
failure probability at most $\varepsilon/4$, and clip the result
$\widehat\beta$ to $[-1,1]$. Then
$\mathbb E|\widehat\beta-\beta(s)|\le\varepsilon$.
Output $1$ with probability $(1-\widehat\beta)/2$, rounded to
accuracy $\varepsilon/4$. The total variation error is at most
$\frac12\mathbb E|\widehat\beta-\beta(s)|+\varepsilon/4
\le\varepsilon$.

To simulate a $\halfBQPone$ reply, first draw $s$ uniformly and
return it together with this bit. The joint error is the average
of the conditional errors. For $T$ adaptive requests, use error
$\delta/T$ per reply. Conditional on matching transcripts so far,
the real and simulated clients make the same next request; couple
the next replies with disagreement probability at most $\delta/T$.
The probability of any disagreement is at most $\delta$.
\end{proof}

The proof holds for every basis input. An implementation error
$\|\widetilde A-A\|\le\eta$ on the full Hilbert space adds at
most $\eta$ to the reply's total variation error.

\section{Factoring stages as single-call circuits}
\label{sec:readout-arithmetic}

We now identify the preparations, arithmetic operations, and
postprocessing to which Theorem~\ref{thm:query-class-sampler}
applies. The classical decoding of the measurement results occurs
after these quantum stages.

\emph{Arithmetic and workspace.}
An ideal oracle that writes a modular product into a target,
\[
 \ket{x,v}\longmapsto
 \ket{x,\,v\oplus(\prod_i a_i^{x_i}\bmod N)},
\]
is an efficiently computable basis permutation and is its own
inverse. The theorem allows this oracle whether or not it has a
small-workspace quantum implementation. It also allows any actual
classical reversible arithmetic network: its action on all wires,
including arbitrarily initialized workspace, is a permutation
whose inverse is evaluated by reversing the network.
A general quantum implementation specified only on initialized
workspace does not automatically satisfy this hypothesis.

\emph{Standard quantum stages.}
The relevant instances of $A=\mathcal N O_\pi V$ are as follows.
\begin{itemize}[leftmargin=*]
\item \emph{Shor and Jacobs--Mehraban.}
The exponent register is prepared using Hadamards or $F_Q$,
the arithmetic computes $a^x\bmod N$, and the final operation
on that register is a cyclic Fourier transform, as in
Eq.~\eqref{eq:query-jacobs-mehraban}.

\item \emph{Eker{\aa}--H{\aa}stad.}
Two exponent registers control $g^u h^{-v}\bmod N$ and are
Fourier transformed. Their uniform preparations can be written
using $F_{Q_1}\otimes F_{Q_2}$; the lattice decoding is classical
\cite[Sec.~4.3]{ekera2017}.

\item \emph{Kitaev.}
For a fixed probe schedule, defer the probe measurements and move
their final single-qubit gates past the other probes' controlled
multiplications. The resulting circuit is
\[
 \left(\prod_i H_iR_i^{e_i}\right)O
 \left(\prod_i H_i\right),
 \qquad R=\operatorname{diag}(1,\mathrm{i}),
\]
where, for $y<N$,
$O\ket{\mathbf c,y}
 =\ket{\mathbf c,\,a^{\sum_i c_i2^{j_i}}y\bmod N}$,
and $O$ is the identity for $y\ge N$.
Here $\gcd(a,N)=1$, so the arithmetic has an efficient inverse,
and the surrounding circuits are Clifford
\cite[Secs.~3--4]{kitaev1995}.

\item \emph{Regev and its space-optimized variants.}
The preparation is a product discrete Gaussian, followed by one
modular product and $F_D^{\otimes d}$
\cite[Sec.~3]{regev2024efficient}.
The lemma below gives a tractable realization of the preparation.
The same form covers the arithmetic and classical-postprocessing
modifications in Refs.~\cite{ragavan2025space,ekeraGartner2024extensions}.
\end{itemize}

Hadamards, cyclic Fourier transforms, and the indicated Clifford
circuits are normalizers over the appropriate product group.
Their action on basis states is tractable. A semiclassical Fourier
transform with its prescribed feed-forward schedule has the same
measurement statistics as the coherent Fourier transform
\cite{griffiths1996semiclassical}. These observations cover the
specified schedules, not arbitrary adaptive operations within a
request.

\begin{lemma}[Tractable Gaussian preparation]
\label{lem:readout-gaussian}
Let $D=2^t\le4\sqrt d\,R$ and $0<\eta<1$, and set
\[
 k=\min\!\left\{t,
       \left\lceil\log_2(128\pi d^2/\eta)\right\rceil\right\}.
\]
There is an explicit ancilla-free preparation
$V=\bigl(V_\Gamma\otimes H^{\otimes(t-k)}\bigr)^{\otimes d}$ whose
output on the all-zero input is within Euclidean distance $\eta$
of the normalized state with amplitudes
\[
 \exp\!\left[-\frac{\pi}{R^2}
                \sum_{a=1}^d(x_a-D/2)^2\right],
 \qquad x_a\in\{0,\ldots,D-1\}.
\]
Here $V_\Gamma$ acts on $k$ qubits and has size polynomial in $d$ and
$1/\eta$. For inverse-polynomial $\eta$, this preparation is
tractable on every basis input.
\end{lemma}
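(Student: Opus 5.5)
The preparation factorizes over the $d$ coordinates, $V=\bigl(V_\Gamma\otimes H^{\otimes(t-k)}\bigr)^{\otimes d}$, so I will reduce everything to a single coordinate. The target Gaussian amplitude factorizes as $\prod_a \exp[-\pi(x_a-D/2)^2/R^2]$. For product states the Euclidean errors add: if each one-dimensional factor is within $\eta/d$ of its target, then telescoping gives total distance at most $\eta$. This is why $d^2$ appears inside the logarithm in $k$: one factor of $d$ comes from this union over coordinates, and the other from the width bound $D\le4\sqrt d R$ used below.

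\textbf{Step 1: split each coordinate into high and low digits.} Write $x_a=2^{t-k}h+\ell$ with $h<2^k$ and $\ell<2^{t-k}$. The idea is that the Gaussian is nearly constant on blocks of length $2^{t-k}$, so it is well approximated by
\[
\psi(h)\otimes\frac{1}{\sqrt{2^{t-k}}}\sum_\ell\ket\ell .
\]
Here $\psi(h)$ is the normalized Gaussian sampled at the block centres (or the $\ell^2$ block norms), and the uniform factor is exactly $H^{\otimes(t-k)}\ket0$.

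To bound the error, compare the actual amplitude with its block average using the derivative of the Gaussian, $|g'|\le (2\pi/R^2)|x-D/2|\,g$. The relative variation across a block is then $O(2^{t-k}D/R^2)$. Using $D\le 4\sqrt d R$ and $2^{t-k}=D/2^k$, this becomes $O(d/2^k)$ up to the constant $\pi$. With $2^k\ge128\pi d^2/\eta$, the per-coordinate error is at most $\eta/d$ once the constants are tracked. If $k=t$, there are no low digits and the approximation is exact. Making this constant chase match $128\pi$ exactly is the fiddly part, but only the order matters for polynomial size.

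\textbf{Step 2: build $V_\Gamma$ on $k$ qubits.} Since $2^k=O(d^2/\eta)$ is polynomial, the $2^k$-dimensional vector $\psi$ has polynomially many entries. Any such vector can be prepared from $\ket{0^k}$ by a standard state-preparation circuit of $O(2^k)$ two-level (Givens) rotations, with no ancillas. This gives size polynomial in $d$ and $1/\eta$. I would complete $V_\Gamma$ to a unitary explicitly, for example via a Householder reflection mapping $\ket0$ to $\psi$.

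\textbf{Step 3: tractability on every basis input.} For a basis input $s$, $V\ket s$ is a tensor product over $a$ of $V_\Gamma\ket{s_{a,\mathrm{hi}}}$ and $H^{\otimes(t-k)}\ket{s_{a,\mathrm{lo}}}$. Each $V_\Gamma\ket{\cdot}$ is a column of an explicit $2^k\times2^k$ matrix. That column can be computed exactly by multiplying out the polynomially many gates, which gives both its amplitudes and exact sampling. The Hadamard part has amplitudes $\pm2^{-(t-k)/2}$ given by parities and a uniform sampling distribution. Products of CT states are CT, as noted in Appendix~\ref{sec:readout-sampler}, so $V$ is a tractable preparation.

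The main obstacle is Step 1: making the block-constancy error bound rigorous uniformly in $d$ with the stated constant. This includes handling the tails and boundary near $x=0,D-1$, where the Gaussian is small, so the relative-error argument should be replaced by an absolute $\ell^2$ bound.
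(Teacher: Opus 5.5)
Your proposal is correct and follows the paper's proof. The ingredients are the same: the high/low digit split with $H^{\otimes(t-k)}$ on the low digits, the log-ratio bound $|\log g(x)-\log\widetilde g(x)|\le\pi BD/R^2\le16\pi d/2^k$ obtained from $D\le4\sqrt d\,R$, ancilla-free state preparation on the $2^k=O(d^2/\eta)$-dimensional block, and tractability by simulating the small blocks directly and taking products.

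The tail issue you flag as the main obstacle is not one. The log-ratio bound holds uniformly on all of $[0,D)$, and a pointwise multiplicative bound $e^{\pm\alpha}$ on the unnormalized amplitudes already gives normalized Euclidean distance at most $2(e^\alpha-1)$, small amplitudes included, so no separate absolute $\ell^2$ bound is needed. The paper applies this bound jointly over the $d$ coordinates with $\alpha=16\pi d^2/2^k\le\eta/8$, rather than telescoping per coordinate as you do. It also reserves the remaining $\eta/2$ of the error budget for gate synthesis, which your sketch does not allocate.
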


\begin{proof}
The Gaussian varies slowly enough that only its leading bits need
nonuniform amplitudes. To make this precise, let
\[
 g(x)=e^{-\pi(x-D/2)^2/R^2},\qquad
 B=D/2^k,\qquad
 \widetilde g(x)=g(B\lfloor x/B\rfloor).
\]
The function $\widetilde g$ is constant on blocks of $B$ consecutive
integers. If $k=t$, it equals $g$ exactly. Otherwise,
\[
 |\log g(x)-\log\widetilde g(x)|
 \le\frac{\pi BD}{R^2}\le\frac{16\pi d}{2^k}.
\]
For the product over $d$ coordinates, the ratios of the
unnormalized amplitudes therefore lie between $e^{-\alpha}$ and
$e^\alpha$, where $\alpha=16\pi d^2/2^k\le\eta/8$.
The normalized vectors are therefore within Euclidean distance
$2(e^\alpha-1)<\eta/2$.

The normalized block-constant state is
\[
 \left(\ket\Gamma\otimes
             \ket+^{\otimes(t-k)}\right)^{\otimes d},
 \qquad
 \ket\Gamma\ \propto
 \sum_{c=0}^{2^k-1}g(Bc)\ket c.
\]
There are only $2^k=O(d^2/\eta)$ amplitudes to specify.
Standard state preparation by controlled rotations constructs
$V_\Gamma\ket0=\ket\Gamma$ without extra qubits; synthesize the rotations
so that the total additional state error is at most $\eta/2$.
For inverse-polynomial $\eta$, each explicit block has
$k=O(\log m)$ qubits, where $m$ is the total circuit width.
Thus, on \emph{any} basis input, every block can be simulated
directly, and the full product state is CT.
\end{proof}

This lemma specifies a preparation whose action on random inputs
is tractable. Preparing the correct Gaussian on the all-zero input
alone would not guarantee that property. On initialized inputs,
the replacement changes each run's output law by at most $\eta$;
over $T$ runs, choosing $\eta\le\delta/T$ limits the total change
to $\delta$.

\section{A factoring attack from shallow rational recovery}
\label{sec:rr-attack}

We prove Theorem~\ref{thm:rr-attack}. The construction evaluates a
predicate of an eigenphase using only logarithmically many clean
qubits, which a reduction of Shor and Jordan then trades for a
single clean qubit. It never stores the full phase estimate: whenever the
classical decoder needs an input bit, the circuit computes that
bit, uses it, and uncomputes it.

\begin{lemma}[$\mathsf{NC}^1$ evaluation with borrowed workspace]
\label{lem:nc1-borrowed-oracle}
Suppose each output bit of $f:\{0,1\}^n\to\{0,1\}^k$ has a
supplied polynomial-size, logarithmic-depth, fan-in-two Boolean
circuit, with $k=\operatorname{poly}(n)$. Then a polynomial-size
reversible circuit implements
\begin{equation}
 O_f\ket{x,w,\kappa}=\ket{x,w\oplus f(x),\kappa}
 \label{eq:nc1-borrowed-oracle}
\end{equation}
for every $x,w$ and every pair of borrowed bits
$\kappa\in\{0,1\}^2$. The borrowed bits need no initialization
and are restored at the end.
\end{lemma}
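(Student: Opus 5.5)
We use Barrington's theorem in its reversible form with borrowed (dirty) workspace, i.e. the Ben-Or--Cleve construction. We handle one output bit at a time. For output bit $j$, the supplied circuit has depth $d=O(\log n)$. Rewriting it as a formula (unfolding fan-out) keeps depth $d$ and size at most $2^d=\operatorname{poly}(n)$.

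We then compile this formula into a straight-line program over three registers $r_0=w_j$, $r_1=\kappa_1$, $r_2=\kappa_2$. Every instruction has the form $r_a\mathrel{\oplus}= r_b\cdot\ell$ or $r_a\mathrel{\oplus}=\ell$, where $\ell$ is a literal $x_i$ or a constant, or a product of a register with $x_i$. Each such instruction is a Toffoli or CNOT gate controlled by input bits and registers, so it is reversible and does not require $\kappa$ to be clean.

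The key invariant, proved by induction on formula depth, is the following. For every subformula $g$ and every ordered pair of distinct registers $(a,b)$, there is a program $P_g^{a\leftarrow b}$ that maps $r_a\mapsto r_a\oplus g(x)\,r_b$ and fixes all other registers, for arbitrary register contents. There is also a companion program $Q_g^{a}$ mapping $r_a\mapsto r_a\oplus g(x)$.

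\begin{itemize}
\item \emph{Base case.} A literal gives the Toffoli $r_a\mathrel{\oplus}=x_i r_b$. Negation uses $\neg g=1\oplus g$, so it adds one extra $r_a\mathrel{\oplus}=r_b$.
\item \emph{AND.} For $g=g_1\wedge g_2$, with $c$ the third register, use the commutator identity
\[
P_{g_1}^{c\leftarrow b}\,P_{g_2}^{a\leftarrow c}\,P_{g_1}^{c\leftarrow b}\,P_{g_2}^{a\leftarrow c}.
\]
The first two steps give $r_c\oplus g_1 r_b$ and $r_a\oplus g_2(r_c\oplus g_1 r_b)$. Undoing $r_c$ and subtracting $g_2 r_c$ leaves $r_a\oplus g_1g_2 r_b$, and $r_c$ is restored.
\item \emph{XOR.} This is simply concatenation.
\end{itemize}

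Each level multiplies the length by at most $4$, so the total length is $4^{d}=\operatorname{poly}(n)$.

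To obtain the unconditional XOR into $w_j$, write $f_j(x)=f_j(x)\cdot 1$. With a dirty $r_b$ we use the standard trick
\[
P_{f_j}^{a\leftarrow b}\;(r_b\mathrel{\oplus}=1)\;P_{f_j}^{a\leftarrow b}\;(r_b\mathrel{\oplus}=1).
\]
This adds $f_j r_b\oplus f_j(r_b\oplus1)=f_j$ to $r_a$ and restores $r_b$. It uses $r_a=w_j$, $r_b=\kappa_1$, and $\kappa_2$ as the third register inside the AND steps. Concatenating over $j=1,\dots,k$ gives $O_f$ of polynomial size.

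The main obstacle is keeping the register bookkeeping consistent, since the AND step must route through the third register while every intermediate quantity is expressed relative to arbitrary dirty contents. Induction on the stated invariant, quantified over all register values, handles this. Uniformity follows because the compilation is a simple recursive transformation of the supplied circuit.
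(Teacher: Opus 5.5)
Your proof is correct, but it takes a different route from the paper's. The paper uses Barrington's theorem as a black box in its finite-simple-group form. It takes the group to be $A_8$ acting on the eight basis states of $(w_j,\kappa_1,\kappa_2)$. The target $X\otimes I_4$ is a product of four transpositions, hence a nonidentity element of $A_8$. The theorem then supplies a program of constant-size controlled permutations whose net action is the identity or $X\otimes I_4$ according to $f_j(x)$.

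You instead build the Ben-Or--Cleve three-register straight-line program over $\mathbb F_2$ by hand, in three pieces:
\begin{itemize}
\item the invariant $r_a\mapsto r_a\oplus g(x)\,r_b$, quantified over all ordered register pairs and all register contents;
\item the commutator $P_{g_1}^{c\leftarrow b}P_{g_2}^{a\leftarrow c}P_{g_1}^{c\leftarrow b}P_{g_2}^{a\leftarrow c}$ for AND, which works because each $P$ acts as an involution;
\item the final $(r_b\mathrel{\oplus}=1)$ sandwich, which removes the dependence on the dirty control.
\end{itemize}
In effect this is Barrington's argument run in the affine group of $\mathbb F_2^3$, with the commutator written out explicitly.

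What your route buys is self-containment. It yields NOT/CNOT/Toffoli gates directly, rather than abstract controlled permutations that still need gate synthesis. It also makes restoration of arbitrary $\kappa$ a one-line algebraic check. The paper's version is shorter but leans on the general simple-group theorem.

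One small accounting point. The supplied gates are arbitrary fan-in-two functions, and negations may sit at internal nodes. Handling a general gate as $c_0\oplus c_1g_1\oplus c_2g_2\oplus c_{12}g_1g_2$ costs up to $3(L_1+L_2)+1$ instructions, not a factor of $4$. Handling $\neg g$ as $P_g^{a\leftarrow b}$ followed by $r_a\mathrel{\oplus}=r_b$ adds one instruction per negation. The total length is therefore $2^{O(d)}$ rather than exactly $4^d$, which is still $\operatorname{poly}(n)$.
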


\begin{proof}
A permutation program is a sequence of fixed permutations, each
selected by one unchanged input bit. Barrington's theorem converts
an $\mathsf{NC}^1$ circuit into a polynomial-length program of
constant width. We use its finite-simple-group form
\cite[Theorem~1]{ablayev2004branching} with the alternating group
$A_8$ on the eight states of one target bit and two borrowed bits.
The desired permutation $X\otimes I_4$ is a product of four
transpositions, so it is a nonidentity element of $A_8$.
The theorem supplies a program whose total action is the identity
when $f_j(x)=0$ and $X\otimes I_4$ when $f_j(x)=1$.
Each instruction is a constant-size controlled permutation.
Apply one such program to each target bit $w_j$, reusing the same
borrowed bits. The resulting identity holds on every basis state,
and therefore on superpositions as well.
\end{proof}

For example, modular exponentiation with public base and modulus
has the required shallow circuits after classical preprocessing
\cite[Claim~2]{jacobs2024half}. In the argument below, the
assumption $\mathrm{RR}\in\mathsf{NC}^1$ supplies such circuits
for the indexed output bits of rational recovery, as defined in
the main text.

\begin{proof}[Proof of Theorem~\ref{thm:rr-attack}]
Let $N$ be a product of two distinct odd primes. Fix a unit $a$
modulo $N$, write $n=\lceil\log_2N\rceil$, and let
$r=\operatorname{ord}_N(a)$. We first recover $r$ using one-bit
threshold tests; the usual randomized reduction then factors $N$.

\emph{An efficiently accessible eigenphase.}
On two $n$-qubit registers, use the basis permutation
\[
 B_a(x,y)=
 \begin{cases}
 (ax\bmod N,\ a^{-1}y\bmod N),&x,y<N,\\
 (x,y),&\text{otherwise}.
 \end{cases}
\]
Controlled implementations use polynomially many gates,
borrowed workspace, and $O(\log n)$ clean qubits. On valid
residues this is the modular bimultiplication construction of
Ref.~\cite[Secs.~2.2--2.3]{gidney2017dirty}: perform
$y\mathrel{+}=ax$, $x\mathrel{-}=a^{-1}y$, and
$y\mathrel{+}=ax$ modulo $N$, then swap the registers and negate
the second. To enforce the identity outside the valid region,
compute the flag $[x<N\text{ and }y<N]$, control the construction
on it, and uncompute the flag. Comparisons use
Lemma~\ref{lem:nc1-borrowed-oracle}, and the flag can be erased
because the valid region is preserved. Finally,
$B_a^h=B_{a^h}$, so the powers needed below are implemented by
precomputing new public constants, rather than repeating $B_a$
$h$ times.

The maximally mixed target can be regarded as a uniform mixture
of eigenstates~\cite{ParkerPlenio2000}. Every eigenphase
$\theta\in[0,1)$ has a reduced denominator $d$ dividing $r$,
because $B_a^r=I$. Moreover, whenever $x$ is a unit modulo $N$
and $y<N$, the pair $(x,y)$ lies on a cycle of length exactly $r$.
Each such cycle supplies the phases $j/r$ uniformly, of which
a fraction $\varphi(r)/r$ have reduced denominator $r$. Hence
\begin{equation}
 \Pr(d=r)\ge
 \frac{N\varphi(N)}{2^{2n}}\frac{\varphi(r)}r
 \ge\gamma_n:=\frac{2}{15(n+1)}.
 \label{eq:rr-full-denominator}
\end{equation}
Here $\varphi(N)/N\ge8/15$ and $N/2^n\ge1/2$.
For the remaining bound, if $r$ has $h$ distinct prime factors,
the $i$th smallest is at least $i+1$, so
$\varphi(r)/r\ge\prod_{i=1}^h i/(i+1)
=1/(h+1)\ge1/(n+1)$.

\emph{A shallow predicate of the phase.}
Set $\mu=2n+2$ and choose a classical dyadic shift $\tau$.
For the moment, imagine having the $\mu$-bit integer
\[
 \xi=\left\lfloor2^\mu((\theta+\tau)\bmod1)\right\rfloor.
\]
Undo the shift by defining
\[
 \vartheta_\tau(\xi)=(\xi/2^\mu-\tau)\bmod1,\qquad
 t_\tau(\xi)=
 \begin{cases}
 0,&\vartheta_\tau(\xi)>1-2^{-\mu},\\
 \vartheta_\tau(\xi),&\text{otherwise}.
 \end{cases}
\]
The endpoint correction accounts for a phase near zero wrapping
around to a number near one. Every nonzero eigenphase is at least
$1/r$ from both endpoints, and $r<N$, so
\[
 |t_\tau(\xi)-\theta|<2^{-\mu}<\frac1{2N^2}.
\]
Consequently $\mathrm{RR}(N,t_\tau(\xi))=d$. For a threshold $K$, the
predicate
\[
 f_{K,\tau}(\xi)=\mathbf1\{\mathrm{RR}(N,t_\tau(\xi))\ge K\}
\]
is in $\mathsf{NC}^1$: rational recovery is shallow by
assumption, and the dyadic arithmetic, comparisons, and endpoint
correction are also shallow. By
Lemma~\ref{lem:nc1-borrowed-oracle}, a polynomial-length program
can write this predicate into one initially zero output bit,
using two borrowed bits. Each instruction needs only one bit of
$\xi$, which is why the full integer need not be stored.

\emph{Supplying one phase bit at a time.}
Number the bits of $\xi$ from most to least significant. Its $j$th
bit specifies which half of the unit circle contains
$2^{j-1}(\theta+\tau)\bmod1$. The only difficulty is that a phase
very close to $0$ or $1/2$ cannot be classified reliably at low
precision. A random shift makes such cases rare.

Choose $\tau$ uniformly from the dyadic grid with $\mu+b$ bits.
For every fixed $\theta$, a union bound gives
\[
 \Pr_\tau\!\left[
   \text{some }2^{j-1}(\theta+\tau)\bmod1
   \text{ lies within }\Delta\text{ of }0\text{ or }1/2
 \right]
 =O\!\left(\mu(\Delta+2^{-b})\right).
\]
Choose $\Delta$ inverse polynomially small and
$b=O(\log n)$ so that this probability is at most
$\gamma_n/1000$.
Away from these boundaries, coherent phase estimation of
\[
 e^{2\pi\mathrm{i}2^{j-1}\tau}B_a^{\,2^{j-1}}
\]
supplies the required bit with error probability at most
$\varepsilon$, using
$O(\log(1/(\Delta\varepsilon)))$ clean qubits
\cite[Sec.~5 and Appendix~C]{CleveEtAl1998}.

For each program instruction, compute this bit, use it to control
the instruction's permutation, and reverse the phase estimation.
If the bit were exact, this would restore the workspace exactly.
With wrong-bit probability at most $\varepsilon$, the resulting
state differs from that ideal action by at most
$2\sqrt\varepsilon$ in Euclidean norm. For a program of length
$L=\operatorname{poly}(n)$, replacing the instructions one at
a time bounds the total error by $2L\sqrt\varepsilon$.
Choose $\varepsilon$ so this is at most $\gamma_n/1000$,
and give gate synthesis the same error budget. All required
precisions are polynomial, the number of clean qubits remains
$O(\log n)$, and no intermediate measurement is used.
Including the bad shifts, the acceptance probability differs
from the ideal one by less than $\gamma_n/100$, averaged over
$\tau$ and the mixed target.

\emph{Recovering the order from one-bit tests.}
The ideal acceptance probability is
\[
 p_K=\Pr(d\ge K),\qquad
 \begin{cases}
 p_K=0,&K>r,\\
 p_K\ge\gamma_n,&K\le r.
 \end{cases}
\]
Thus an additive estimate with error much smaller than
$\gamma_n$ distinguishes whether $K\le r$.
The circuit above uses $O(\log n)$ clean qubits and one measured
bit. Shor and Jordan's reduction~\cite[Sec.~1]{ShorJordan2008}
estimates the acceptance probability of such a circuit to
inverse-polynomial additive accuracy from polynomially many
$\complexity{DQC1}$ requests, each with a single clean qubit;
averaging over the classical shift $\tau$ preserves the estimate.
(Through $\halfBQPone$ directly, one may instead designate the
$q=O(\log n)$ clean input bits in advance and retain only replies
in which their revealed values are zero. This event has probability
$2^{-q}=1/\operatorname{poly}(n)$, conditioned on it the other
inputs remain maximally mixed, and capping the number of attempts
adds an arbitrarily small failure probability at polynomial cost.)
Repeating with fresh shifts and estimating the mean gives the
required threshold test.

Binary search over $1\le r<N$ uses $O(n)$ such tests.
Amplify each test to failure probability at most $1/(10n)$
to recover $r$ with bounded error. Applying this procedure to
random choices of $a$, with classical gcd computations and
verification, gives the standard order-to-factor reduction
\cite{shor1997}. This places $\Factoring$ in
$\complexity{BPP}^{\complexity{SampDQC1}}$; the containment in
$\complexity{BPP}^{\complexity{Samp}\halfBQPone}$ is
Theorem~\ref{thm:containments}.
\end{proof}

The circuits of this proof repeatedly alternate phase estimation
and arithmetic, so they lie outside the single-call family of
Theorem~\ref{thm:query-class-sampler}, as they must.

\end{document}